\newcommand{\Z}{\mathbb{Z}}
\newcommand{\var}{{\rm var}}
\newcommand{\R}{\mathbb{R}}
\newcommand{\N}{\mathbb{N}}
\newcommand{\diam}{{\rm diam}}
\newcommand{\I}{\mathscr{I}}
\newcommand{\M}{\mathscr{M}}
\newcommand{\interior}{{\rm int}}
\newcommand{\E}{\mathscr{E}}
\newcommand{\F}{\mathscr{F}}
\newtheorem{dfn}{Definition}[section]
\newtheorem{prop}{Proposition}[section]
\newtheorem{thm}[prop]{Theorem}
\newtheorem{lem}[prop]{Lemma}
\newtheorem{prob}[prop]{Problem}
\theoremstyle{definition}
\newtheorem{ex}{Example}[section]
\newtheorem{rem}{Remark}[section]
\DeclarePairedDelimiter{\abs}{\lvert}{\rvert}
\newcommand{\opnorm}{\@ifstar\@opnorms\@opnorm}
\newcommand{\@opnorms}[1]{%
	\left|\mkern-1.5mu\left|\mkern-1.5mu\left|
	#1
	\right|\mkern-1.5mu\right|\mkern-1.5mu\right|
}
\newcommand{\@opnorm}[2][]{%
	\mathopen{#1|\mkern-1.5mu#1|\mkern-1.5mu#1|}
	#2
	\mathclose{#1|\mkern-1.5mu#1|\mkern-1.5mu#1|}
}
\title[stability of uniqueness and coexistence of equilibrium states]{Stability of Uniqueness and Coexistence of Equilibrium States of the Ising Model under Long Range Perturbations}
\author{Shunsuke Usuki}
\date{}
\begin{document}
	\maketitle

	\begin{abstract}
		In this paper, we study perturbations of the $d$-dimensional Ising model for $d\geq 2$, including long range ones to which the Pirogov-Sinai theory is not applicable.
		We show that the uniqueness of the equilibrium state of the Ising model at high temperature and the coexistence of equilibrium states at low temperature are preserved by spin-flip symmetric perturbations.
	\end{abstract}

\footnote[0]{2020 Mathematics Subject Classification. 82B20, 82B26, 82B05.}
	
	\section{Introduction}\label{intro}
	In this paper, we deal with {\bf $\Z^d$-lattice systems} for $d\geq 2$. We begin with the definition of them.
	Let $F$ be a finite set and $\Omega=F^{\Z^d}$. Then $\Omega$ is a compact metrizable space with respect to the product topology. For $\Lambda\subset\Z^d$, we write $\Omega_\Lambda=F^\Lambda$ and $\omega=\left\{\omega(x)\left|x\in\Lambda\right.\right\}$ as an element of $\Omega_\Lambda$. 
	For each $a\in\Z^d$, the {\bf translation} by $a$ is canonically defined, that is, for each $\Lambda\subset\Z^d$, we define $\tau^a:\Omega_\Lambda\to\Omega_{\Lambda-a}$ by
	$$(\tau^a\omega)(x)=\omega(x+a),\quad x\in \Lambda-a$$
	for $\omega\in\Omega_\Lambda$.
	When $\Lambda=\Z^d$, these translations define the $\Z^d$-action on $\Omega$.
	In this paper, we write $\Lambda\Subset\Z^d$ when $\Lambda\subset\Z^d$ and $\Lambda$ is a finite set.
	We say $\Phi$ is an {\bf interaction} if $\Phi$ is a family of functions $\Phi_\Lambda:\Omega_\Lambda\to\R$  for each $\Lambda\Subset\Z^d$ and write $\Phi=\left\{\Phi_\Lambda\right\}_{\Lambda\Subset\Z^d}$. 
	We always assume in this paper that an interaction $\Phi$ is {\bf translation invariant}, that is, for any $a \in {\Z}^d$ and $\Lambda\Subset {\Z}^d$,
	$$\Phi_{\Lambda}(\omega)=\Phi_{\Lambda-a}(\tau^a\omega).$$
	We also assume that $\Phi$ is {\bf absolutely summable}, that is, the norm 
	$$\|\Phi\|=\sum_{0\in\Lambda\Subset{\Z}^d}\sup_{\xi\in \Omega_{\Lambda}}\left| \Phi_{\Lambda}(\xi)\right|$$
	is finite.
	Under such an interaction $\Phi$, for each $\Lambda\Subset\Z^d$ and $\omega\in\Omega$, the {\bf Hamiltonian} is defined by
	\begin{equation*}
		H_{\Phi,\Lambda}(\omega)=\sum_{\Delta\in\F_\Lambda}\Phi_\Delta(\omega),
	\end{equation*}
	where $\F_\Lambda=\left\{\Delta\Subset\Z^d\left|\Delta\cap\Lambda\neq\emptyset\right.\right\}$ and $\Phi_\Delta(\omega)=\Phi_\Delta\left(\{\omega(x)|x\in\Delta\}\right)$. By the translation invariance and the absolute summability of $\Phi$, it is easily seen that the infinite sum converges.
	Then $\Omega$ is interpreted as the {\bf configuration space} and $H_{\Phi,\Lambda}\ (\Lambda\Subset\Z^d)$ define the energy for each configuration $\omega\in\Omega$.
	Then we regard $\Omega$ with $\Phi$ as a $\Z^d$-lattice system.
	We say that an interaction $\Phi$ is {\bf finite range} if
	$$
	r(\Phi)=\inf\left\{R>0\left| \Phi_\Lambda=0\text{ for any }\Lambda\Subset\Z^d \text{ with }\diam(\Lambda)>R\right.\right\}<\infty.
	$$
	In the above, for $\Lambda\subset\Z^d$,
	$$\diam(\Lambda)=\sup\left\{\left.\|x-y\|_\infty\right| x,y\in\Lambda\right\},$$
	where $\|x\|_\infty=\max\left\{|x_1|,\ldots,|x_d|\right\}$ for $x=(x_1,\ldots,x_d)\in\Z^d$.
	We emphasize that, in this paper, we deal with interactions which are {\bf long range}, that is, $r(\Phi)=\infty$ in general.
	
	The {\bf Ising model} is a famous and important example of a $\Z^d$-lattice system, which we mainly deal with in this paper. This is a simple mathematical model of ferromagnetism.
	\begin{dfn}[The Ising model]
		\label{Ising}
		Let $\beta>0$. The $d$-dimensional Ising model at the inverse temperature $\beta$
		is a $\Z^d$-lattice system on $\Omega=\{1,-1\}^{\Z^d}$ with a finite range interaction $\Phi^\beta$ defined by
		$$\Phi^\beta_{\Lambda}(\omega)=\begin{cases}
			-\beta\omega(x)\omega(y) , &\text{if $\Lambda=\{x,y\}$ with $\|x-y\|_1=1$}\\
			0,&\text{otherwise,} \end{cases}
		$$
		for each $\Lambda\Subset\Z^d$ and $\omega\in\Omega_{\Lambda}=\{1,-1\}^{\Lambda}$, where $\|x\|_1=\sum_{i=1}^{d}|x_i|$ for $x=(x_1,\ldots,x_d)\in\Z^d$. Here, $\beta$ is the parameter representing the inverse of the temperature of the system.
	\end{dfn}
	
	For a $\Z^d$-lattice system, Borel probability measures on $\Omega$ which represent
	statistical states of the system stable under the interaction are determined and called {\bf Gibbs states}.
	Translation invariant probability measures representing stable
	statistical states are called {\bf equilibrium states}. We give the rigorous definition in Section \ref{latticesystems}.
	It is an important problem to determine whether there are more than one equilibrium state for the system or not. For the case of $d=1$, it is known that there exists the only one equilibrium state for any interaction under a reasonable condition. For example, finite range interactions satisfy this condition (see \cite[Chapter 5]{Ru}). On the other hand, the system might have more than one equilibrium state when $d\geq2$. Concerning this problem, the Ising model defined in Definition \ref{Ising} shows the remarkable property below (for example, see \cite[Chapter 3]{FV1}).
	\begin{prop}[Phase transition on the Ising model]
		\label{phasetransIsing}
		Let $d\geq2$. Then there exists $\beta_c>0$ depending on $d$ such that the followings hold. 
		\begin{enumerate}
			\renewcommand{\labelenumi}{\rm{(\arabic{enumi})}}
			\item If $\beta<\beta_c$, then $\Phi^\beta$ has a unique Gibbs state and, in particular, a unique equilibrium state. 
			\item If $\beta>\beta_c$, then $\Phi^\beta$ has more than one equilibrium state.
		\end{enumerate}
	\end{prop}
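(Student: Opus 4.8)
We propose the classical route, via the FKG inequality, the two extremal boundary-condition states, and the Peierls argument. Write $\langle\,\cdot\,\rangle^{+}_{\Lambda,\beta}$ (resp.\ $\langle\,\cdot\,\rangle^{-}_{\Lambda,\beta}$) for the Gibbs expectation in $\Lambda\Subset\Z^d$ with all boundary spins set to $+1$ (resp.\ $-1$). Since $\Phi^\beta$ is ferromagnetic, FKG applies, and these finite-volume states are stochastically monotone in $\Lambda$ (decreasing for the $+$ condition, increasing for the $-$ condition); hence the weak limits $\mu^{+}_\beta$ and $\mu^{-}_\beta$ exist. They are Gibbs states, they are respectively the stochastically largest and smallest Gibbs states, they are translation invariant (a translate of $\mu^{\pm}_\beta$ is again a maximal, resp.\ minimal, Gibbs state), and the global spin flip interchanges them. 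Put $m^{*}(\beta):=\langle\sigma_0\rangle^{+}_\beta\ge 0$, so that $\langle\sigma_0\rangle^{-}_\beta=-m^{*}(\beta)$. Two observations reduce the statement to showing $0<\beta_c<\infty$ for
$$\beta_c:=\inf\{\,\beta>0 : m^{*}(\beta)>0\,\}.$$
First, $m^{*}$ is non-decreasing: realizing the $+$ boundary condition as a non-negative external field, the second Griffiths (GKS) inequality shows $\beta\mapsto\langle\sigma_0\rangle^{+}_{\Lambda,\beta}$ is non-decreasing for every $\Lambda$, and $m^{*}$ is the decreasing $\Lambda$-limit of these. Second, $m^{*}(\beta)=0$ forces $\mu^{+}_\beta=\mu^{-}_\beta$: take a monotone coupling of $\mu^{-}_\beta\preceq\mu^{+}_\beta$ (Strassen); then for each $x$ the non-negative random variable $\sigma^{+}_x-\sigma^{-}_x$ has mean $\langle\sigma_x\rangle^{+}_\beta-\langle\sigma_x\rangle^{-}_\beta=2m^{*}(\beta)$, which is $0$ by hypothesis, so $\sigma^{+}=\sigma^{-}$ almost surely and the two measures agree; since every Gibbs state is sandwiched between $\mu^{-}_\beta$ and $\mu^{+}_\beta$, the Gibbs state is then unique. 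Conversely $m^{*}(\beta)>0$ gives $\mu^{+}_\beta\ne\mu^{-}_\beta$, hence two distinct translation-invariant Gibbs states, i.e.\ two equilibrium states. Thus (1) holds for $\beta<\beta_c$ and (2) for $\beta>\beta_c$.

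For $\beta_c>0$ I would show uniqueness of the Gibbs state for all small $\beta$, which by the above gives $m^{*}(\beta)=0$ there. The cleanest tool is Dobrushin's uniqueness theorem: for $\Phi^\beta$ the single-site conditional distribution depends on a neighbour only through a factor like $\tfrac12\left(\tanh(\beta(k+1))-\tanh(\beta(k-1))\right)$, so the Dobrushin interdependence sum is $O(d\beta)$, which is $<1$ once $\beta$ is below an explicit constant $\beta_0(d)>0$; a convergent high-temperature (cluster) expansion gives the same conclusion. Either way $\beta_c\ge\beta_0(d)>0$.

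For $\beta_c<\infty$ I would run the Peierls argument. In a box $\Lambda$ with $+$ boundary condition, to each configuration associate the family of Peierls contours separating its $\pm$ regions; if $\sigma_0=-1$ then some contour $\gamma$ encloses the origin. Flipping every spin strictly inside $\gamma$ is an injection of configurations that raises the statistical weight by the factor $e^{2\beta|\gamma|}$ (each of the $|\gamma|$ plaquettes of $\gamma$ sits on a disagreeing edge that becomes agreeing), so $\bP^{+}_{\Lambda,\beta}(\gamma\text{ occurs})\le e^{-2\beta|\gamma|}$. Summing over contours through the origin and using the standard count — at most $C(d)\,\ell\,c(d)^{\ell}$ contours of size $\ell$ enclose a fixed point — gives
$$\langle\sigma_0\rangle^{+}_{\Lambda,\beta}\ \ge\ 1-2\sum_{\ell\ge 2d} C(d)\,\ell\,c(d)^{\ell}e^{-2\beta\ell},$$
uniformly in $\Lambda$; the right side is strictly positive for $\beta$ large, and letting $\Lambda\uparrow\Z^d$ yields $m^{*}(\beta)>0$. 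Hence $\beta_c<\infty$. (One could equally organise the whole argument through the random-cluster representation, where $\beta_c$ becomes a percolation threshold, but the correlation-inequality route above is self-contained; cf.\ \cite[Chapter 3]{FV1}.)

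The main obstacle is the Peierls step: setting up the contour geometry correctly in general dimension $d$ (plaquettes of the dual lattice, a suitable notion of connectedness, and the fact that the $\pm$ interface of a configuration with $+$ boundary condition does decompose into closed contours), verifying that ``flip the interior of $\gamma$'' is a well-defined injection with the stated energy gain, and carrying through the combinatorial bound on the number of contours of a given size through the origin. A more minor subtlety is the passage from equality of one-point functions to equality of the full measures $\mu^{+}_\beta=\mu^{-}_\beta$, together with the translation invariance and extremality of $\mu^{\pm}_\beta$, all of which rest on FKG and a monotone coupling.
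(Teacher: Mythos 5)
The paper does not actually prove Proposition \ref{phasetransIsing} --- it is quoted as a classical fact with a pointer to \cite[Chapter 3]{FV1} --- and your proposal is precisely the standard argument given there: FKG monotonicity reducing everything to the spontaneous magnetization $m^{*}(\beta)$ and its monotonicity in $\beta$, Dobrushin's criterion for $\beta_c>0$ (which the paper itself invokes in Section \ref{sectionDob}, where $\|\Phi^\beta\|_{\var}=4d\beta<2$ gives an explicit lower bound), and the Peierls contour argument for $\beta_c<\infty$ (which Section \ref{sectperturbIsinglow} of the paper generalizes to perturbed interactions, including the contour-counting bound of Lemma \ref{numberofcontour}). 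Your sketch is correct and matches this route; the only delicate points are the ones you already flag, namely the contour geometry in general dimension $d$ and the passage from $m^{*}(\beta)=0$ to equality of the full measures $\mu^{+}_\beta=\mu^{-}_\beta$.
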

	This phenomenon is called {\bf phase transition} on the Ising model and $\beta_c$ as above the critical inverse temperature.
	\begin{rem}
		\begin{enumerate}
			\renewcommand{\labelenumi}{\rm{(\roman{enumi})}}
			\item For the Ising models with $\beta>\beta_c$, the concrete structure of the set of equilibrium states is known (see \cite{Aiz80} or \cite{Hig81} for $d=2$ and \cite{Bod06} for $d\geq 3$). We will mention this structure in detail later (Section \ref{furtherquestions}).
			\item
			It seems difficult to study the Ising model at the critical temperature. However, it is shown in \cite{ADS15} that, at $\beta=\beta_c$, $\Phi^{\beta_c}$ has a unique Gibbs state.
		\end{enumerate}
	\end{rem}
	
	The central question of this paper is to determine for the Ising model whether the uniqueness or coexistence of equilibrium states is preserved if we perturb the interaction $\Phi^\beta$. 
	Here a perturbation of $\Phi^\beta$ means an interaction $\Phi^\beta+\Psi$ defined by
	$$
	(\Phi^\beta+\Psi)_\Lambda=\Phi^\beta_\Lambda+\Psi_\Lambda
	$$
	for each $\Lambda\Subset\Z^d$ for some small perturbation interaction $\Psi$.
	Perturbations of the interaction may correspond to some influence from the surroundings to the system or some noise in the interaction. Hence it is interesting to study what happens under perturbations of the interaction. 
	We remark that the `upper semicontinuity' of equilibrium states in perturbations of interactions holds in the following sense.
	\begin{prop}\label{perturbcont}
		Let $\Phi_0$ be an interaction, $I_{\Phi_0}$ be the set of equilibrium states for $\Phi_0$ and $I$ be the set of translation invariant Borel probability measures on $\Omega$.
		Then, for any neighborhood $U$ of $I_{\Phi_0}$ in $I$ with respect to the weak*-topology,
		there exists $0<\delta<1$ such that, for any interaction $\Psi$ with $\|\Psi\|<\delta$, we have
		$$I_{\Phi_0+\Psi}\subset U.$$
	\end{prop}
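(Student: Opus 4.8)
The plan is to derive the proposition from the variational characterization of equilibrium states. Recall (this will be set up in Section \ref{latticesystems}; see also \cite{FV1} and \cite{Ru}) that to an absolutely summable interaction $\Phi$ one associates a continuous mean-energy observable $A_\Phi\in C(\Omega)$ with $A_{\Phi+\Psi}=A_\Phi+A_\Psi$ and $\|A_\Psi\|_\infty\le\|\Psi\|$, that the pressure
$$P(\Phi)=\sup_{\mu\in I}\bigl(h(\mu)-\mu(A_\Phi)\bigr)$$
is attained exactly on $I_\Phi$, where $h\colon I\to[0,\log|F|]$ denotes the mean entropy, and that $h$ is affine and upper semicontinuous for the weak*-topology. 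From $\|A_\Psi\|_\infty\le\|\Psi\|$ it follows immediately that $\mu\mapsto\mu(A_\Phi)$ is weak*-continuous and that $|P(\Phi)-P(\Phi+\Psi)|\le\|\Psi\|$, so the pressure is Lipschitz in $\|\cdot\|$.

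First I would argue by contradiction. Assume the conclusion fails; replacing the given neighborhood by its interior we may assume $U$ is weak*-open in $I$. Then for every $n\in\N$ there is an interaction $\Psi_n$ with $\|\Psi_n\|<1/n$ and a measure $\mu_n\in I_{\Phi_0+\Psi_n}$ with $\mu_n\notin U$. Since $I$ is weak*-compact, after passing to a subsequence we may assume $\mu_n\to\mu$ in $I$; as $I\setminus U$ is closed we get $\mu\notin U$, and in particular $\mu\notin I_{\Phi_0}$.

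Next I would pass to the limit in the equilibrium identity. For each $n$, since $\mu_n$ is an equilibrium state for $\Phi_0+\Psi_n$,
$$h(\mu_n)-\mu_n(A_{\Phi_0})-\mu_n(A_{\Psi_n})=h(\mu_n)-\mu_n(A_{\Phi_0+\Psi_n})=P(\Phi_0+\Psi_n).$$
Here $|\mu_n(A_{\Psi_n})|\le\|\Psi_n\|\to0$, $\mu_n(A_{\Phi_0})\to\mu(A_{\Phi_0})$ by weak*-convergence, and $P(\Phi_0+\Psi_n)\to P(\Phi_0)$ by continuity of the pressure. Consequently $h(\mu_n)$ converges and
$$\lim_{n\to\infty}h(\mu_n)=P(\Phi_0)+\mu(A_{\Phi_0}).$$
By upper semicontinuity of $h$ we get $h(\mu)\geq\lim_{n\to\infty}h(\mu_n)$, hence $h(\mu)-\mu(A_{\Phi_0})\geq P(\Phi_0)$; combined with the reverse inequality $h(\mu)-\mu(A_{\Phi_0})\leq P(\Phi_0)$ from the variational principle this forces $\mu\in I_{\Phi_0}$, contradicting the previous step. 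This proves the proposition.

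The argument is a soft limiting argument once the standard machinery is in place; the one genuinely nontrivial input is the upper semicontinuity of the mean entropy $h$ on $I$ together with the variational principle, which I would quote rather than reprove. (Note that the case $I_{\Phi_0+\Psi}=\emptyset$ needs no argument, the inclusion being vacuous; but in fact $I_\Phi\neq\emptyset$ always, since $\mu\mapsto h(\mu)-\mu(A_\Phi)$ is upper semicontinuous on the compact set $I$.)
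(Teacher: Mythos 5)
Your argument is correct and rests on exactly the same ingredients as the paper's proof: the variational principle, upper semicontinuity of the mean entropy on the weak*-compact set $I$, and the Lipschitz bounds $\left|\int_\Omega A_{\Phi_0+\Psi}\,d\mu-\int_\Omega A_{\Phi_0}\,d\mu\right|\le\|\Psi\|$ and $|P^{\Phi_0+\Psi}-P^{\Phi_0}|\le\|\Psi\|$. The only difference is presentational: the paper runs the same argument directly, using compactness of $I\setminus U$ and upper semicontinuity to extract a quantitative gap $\varepsilon>0$ with $F_{\Phi_0}<P^{\Phi_0}-\varepsilon$ off $U$ and then taking $\delta=\varepsilon/4$, whereas you argue by contradiction along a sequence $\Psi_n$ with $\|\Psi_n\|<1/n$.
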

	We give the proof in the appendix. We notice that Proposition \ref{perturbcont} does not ensure the stability of the uniqueness or coexistence of equilibrium states.
	
	If the perturbations are finite range, then this perturbation problem can be studied by the known theory, called the {\bf Pirogov-Sinai theory} and it is known that the coexistence of equilibrium states at low temperature is stable under finite range perturbations which preserve some symmetry of the Ising model, called the spin-flip symmetry.
	
	\begin{dfn}
		\label{symm}
		An interaction $\Psi$ on $\Omega=\{1,-1\}^{\Z^d}$ is {\bf spin-flip symmetric} if
		$$\Psi_{\Lambda}(\omega)=\Psi_{\Lambda}(-\omega)$$
		for $\Lambda\Subset\Z^d$ and $\omega\in\Omega_\Lambda$.
	\end{dfn}
	
	Our main theorem says that the same statement holds under long range perturbations to which the Pirogov-Sinai theory is not applicable. We remark in detail the relation between our main theorem and the Pirogov-Sinai theory in Secrion \ref{pirogovsinai}.
	We consider perturbations by small parturbation interactions with respect to the norm $\opnorm{\cdot}$, which we call the {\bf $d$-th order decaying norm}. The following is our main theorem.
	
	\begin{thm}[Main theorem]
		\label{mainthm}
		Let $d\geq2$ and, for $\beta>0$, $\Phi^\beta$ be the interaction of the $d$-dimensional Ising model at the inverse temperature $\beta$. We write $\beta_c$ for the critical inverse temperature. Then we have the following.
		\begin{enumerate}
			\renewcommand{\labelenumi}{\rm{(\arabic{enumi})}}
			\item There exist $0<\beta_h<\beta_c$ and $\delta>0$ such that, for any $0<\beta\leq\beta_h$ and any translation invariant interaction $\Psi$ with $\opnorm{\Psi}<\delta$, $\Phi=\Phi^\beta+\Psi$ has a unique Gibbs state and, in particular, a unique equilibrium state.
			\item There exist $\beta_c<\beta_l<\infty$ and $\delta>0$ such that, for any $\beta\geq\beta_l$ and any translation invariant interaction $\Psi$ with $\opnorm{\Psi} <\delta$ which is spin-flip symmetric, $\Phi=\Phi^\beta+\Psi$ has more than one equilibrium state.
		\end{enumerate}
	\end{thm}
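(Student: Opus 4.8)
The plan is to treat the two parts by different classical tools: part (1) by a Dobrushin-type high-temperature uniqueness criterion, and part (2) by a Peierls contour argument in which the $d$-th order decaying norm is precisely what is needed to absorb a long-range, spin-flip symmetric perturbation into the surface energy of a contour. Part (1) is comparatively routine. Write $\Phi=\Phi^\beta+\Psi$. Dobrushin's uniqueness condition for $\Phi$ follows from $\sum_{0\in\Lambda\Subset\Z^d}(|\Lambda|-1)\sup_{\xi\in\Omega_\Lambda}|\Phi_\Lambda(\xi)|<1$; for $\Phi^\beta$ this quantity is $2d\beta$, and since $|\Lambda|\le(1+\diam(\Lambda))^d\le(2\diam(\Lambda))^d$ whenever $\diam(\Lambda)\ge1$, the corresponding quantity for $\Psi$ is at most a dimensional constant times $\opnorm{\Psi}$. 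Hence one chooses $\beta_h>0$ (automatically below $\beta_c$) and $\delta>0$ so that the condition holds for all $0<\beta\le\beta_h$ and all translation invariant $\Psi$ with $\opnorm{\Psi}<\delta$; Dobrushin's theorem then gives a unique Gibbs state, and since every equilibrium state is a Gibbs state (Section \ref{latticesystems}), the equilibrium state is unique too. (A high-temperature cluster expansion would do equally well.)

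For part (2) we use finite boxes $\Lambda_n\uparrow\Z^d$ with $+$ and $-$ boundary conditions and bound the finite-volume probability $\mu^+_{\Lambda_n}(\{\omega:\omega(x)=-1\})$ by the usual sum over Peierls contours $\gamma\subseteq\Lambda_n$ surrounding $x$. The one nonclassical point is the energy change when $\gamma$ is erased, i.e.\ when the interior $V$ of $\gamma$ is flipped: the Ising part is exactly $2\beta|\gamma|$, while for $\Psi$ the spin-flip symmetry guarantees that $\Psi_\Lambda$ is unchanged unless $\emptyset\ne\Lambda\cap V\ne\Lambda$, so the perturbation changes the energy by at most $2\sum_{\Lambda:\,\emptyset\ne\Lambda\cap V\ne\Lambda}\sup_\xi|\Psi_\Lambda(\xi)|$. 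Any such $\Lambda$ contains a site within $O(\diam(\Lambda))$ of the set of sites adjacent to the plaquettes of $\gamma$, and an $R$-neighbourhood of that set contains $O(|\gamma|R^d)$ sites; summing over the possible anchor sites and using translation invariance therefore bounds this sum by $C|\gamma|\,\opnorm{\Psi}$ for a dimensional constant $C$ — and producing exactly the weight $\diam(\Lambda)^d$, no more, is the reason the norm $\opnorm{\cdot}$ enters. Consequently the energy cost of $\gamma$ is at least $2(\beta-C\opnorm{\Psi})|\gamma|$, so $\mu^+_{\Lambda_n}(\{\omega:\omega(x)=-1\})\le\sum_{\gamma\ni x}e^{-2(\beta-C\delta)|\gamma|}$ uniformly in $x\in\Lambda_n$ and $n$, and the standard count of contours surrounding a point makes the right-hand side $<1/2$ once $\beta$ is large and $\delta$ small; we fix such $\beta_l>\beta_c$ and $\delta>0$.

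It remains to produce two equilibrium states. Because $\Psi$ need not be ferromagnetic, FKG is unavailable, so we take a weak-$\ast$ subsequential limit $\mu^+$ of $(\mu^+_{\Lambda_n})_n$ — a Gibbs state for $\Phi$, the specification being quasilocal by absolute summability — and then a subsequential limit $\bar\mu^+$ of the spatial averages $|\Lambda_n|^{-1}\sum_{a\in\Lambda_n}\tau^a_*\mu^+$, which is a translation invariant Gibbs state, hence an equilibrium state, and which inherits $\bar\mu^+(\{\omega:\omega(0)=-1\})\le\tfrac12-c$ for some $c>0$ from the uniform Peierls bound. Since $\Phi^\beta$ and $\Psi$ are both spin-flip symmetric, the global flip $\omega\mapsto-\omega$ is a symmetry of $\Phi$, hence carries equilibrium states to equilibrium states, and yields $\bar\mu^-$ with $\bar\mu^-(\{\omega:\omega(0)=-1\})\ge\tfrac12+c$; thus $\bar\mu^+\ne\bar\mu^-$ and $\Phi$ has more than one equilibrium state.

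The hard part is the geometric estimate bounding the perturbation's contribution to the energy of a contour by a constant multiple of $|\gamma|$: this is exactly where long-range interactions threaten the argument, since for a too-slowly decaying $\Psi$ the energy shift of a contour enclosing a large region could grow faster than $|\gamma|$, especially for thin or filamentary contours — and the $d$-th order decaying norm is tailored so that the relevant combinatorial sum yields precisely the weight $\diam(\Lambda)^d$. A secondary technical point, the loss of monotonicity under the perturbation, is dealt with above by working with subsequential weak-$\ast$ limits and spatial averages and checking that the Gibbs property is preserved; part (1) presents no comparable difficulty.
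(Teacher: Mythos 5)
Your proposal is correct, and part (1) matches the paper's argument (Dobrushin's criterion via $\|\Phi^\beta\|_{\rm var}=4d\beta$ and $\|\Psi\|_{\rm var}\leq 2\opnorm{\Psi}$). For part (2), however, you handle the key surface-energy estimate by a genuinely different route. The paper never estimates the sum over disconnected straddling sets directly: it first replaces $\Psi$ by the regrouped interaction $\widetilde{\Psi}_R=\sum_{\Lambda:R(\Lambda)=R}\Psi_\Lambda$ supported on rectangles, shows $\|\widetilde{\Psi}\|\leq\opnorm{\Psi}'$ (this is where the weight $(\diam(\Lambda)+1)^d\geq|R(\Lambda)|$ is spent) and that $\Phi^\beta+\Psi$ and $\Phi^\beta+\widetilde{\Psi}$ have the same Gibbs states, and then exploits that a $\|\cdot\|_1$-connected set meeting both $\interior\gamma$ and its complement must contain an edge crossing $\gamma$, giving the bound $|\gamma|\,\|\widetilde{\Psi}\|$ with no further geometry. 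You instead keep $\Psi$ as is and anchor each straddling set $\Lambda$ at a point of $\Lambda\cap\interior\gamma$ lying within $O(\diam(\Lambda))$ of the $\leq|\gamma|$ interior endpoints of crossing edges; organizing the sum by $R=\diam(\Lambda)$ first, the $O(|\gamma|R^d)$ anchor sites are exactly paid for by the weight $(R+1)^d$ in $\opnorm{\cdot}$, yielding $C(d)\,|\gamma|\,\opnorm{\Psi}$. (That ordering of the sum is the one step that needs care: summing over anchors before fixing the diameter produces a logarithmically divergent factor.) Your route is more direct and proves Theorem \ref{mainthm} (2) as stated; the paper's regrouping buys the stronger Theorem \ref{mainthmstronger}, with the weaker norm $\opnorm{\cdot}'$ carrying an extra $|\Lambda|^{-1}$, and the clean intermediate Theorem \ref{mainthmconnected} for connected-support perturbations under $\|\Psi\|<\delta$ alone. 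A second, minor difference: the paper produces the two equilibrium states from the $+$ and $-$ boundary conditions separately, while you build only $\bar\mu^+$ and obtain the second state by pushing forward under the global spin flip, which is legitimate since $\Phi^\beta+\Psi$ is spin-flip symmetric; both endgames use the same weak-$*$ limits and spatial averaging in place of FKG.
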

	
	The definition of $\opnorm{\cdot}$ is given in Section \ref{sectionmainthm}.
	We notice that this norm is stronger than $\|\cdot\|$ and weaker than the exponentially decaying norm.
	\begin{rem}\label{strongerresult}
		\begin{enumerate}
			\renewcommand{\labelenumi}{\rm{(\roman{enumi})}}
			\item Actually, Theorem \ref{mainthm} (1) is a direct collorary of {\bf Dobrushin's criterion}, which gives a condition for an interaction to have a unique Gibbs state. We remark this in detail in Section \ref{sectionDob}.
			\item Theorem \ref{mainthm} (2) follows from stronger results, Theorems \ref{mainthmstronger} and \ref{mainthmconnected} below.
			\item In Theorem \ref{mainthm}, the $d$-th order decaying condition is crucial. See Remark \ref{magnetization}.
		\end{enumerate}
	\end{rem}
	
	We notice that Theorem \ref{mainthm} (2) fails if we drop the spin-flip symmetry condition. To see this,
	we see the phase diagram of the {\bf Ising model with external fields}.
	For $\beta,h\in\R$, we define the interaction $\{\Phi^{\beta,h}_\Lambda\}_{\Lambda\Subset\Z^d}$ by
	$$
	\Phi^{\beta,h}_\Lambda(\omega)=
	\begin{cases}
		-\beta\omega(x)\omega(y),&\Lambda=\{x,y\}\text{ with }\|x-y\|_1=1\\
		-h\omega(x),&\Lambda=\{x\}\\
		0,&\text{otherwise},
	\end{cases}
	$$
	for each $\Lambda\Subset\Z^d$ and $\omega\in\Omega_\Lambda$. When $h=0$ then $\Phi^{\beta,0}=\Phi^\beta$ and, by taking $h\neq 0$ arbitrarily small, we can think $\Phi^{\beta,h}$ as a perturbation of $\Phi^\beta$.
	Then the following holds (see \cite[Chapter 3]{FV1}).
	\begin{prop}\label{phasetransIsingext}
		For any $\beta,h\in\R$ with $h\neq 0$, $\Phi^{\beta,h}$ has a unique Gibbs state.
	\end{prop}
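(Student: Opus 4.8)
The plan is to reduce the uniqueness of the Gibbs state for $\Phi^{\beta,h}$ to the differentiability in $h$ of the pressure, and then to rule out points of non-differentiability other than $h=0$ by the Lee--Yang circle theorem. Fix $\beta$ (I treat the ferromagnetic case $\beta\ge 0$, which is the one relevant to perturbations of $\Phi^\beta$) and $h\ne 0$; by the spin-flip $\omega\mapsto-\omega$ it is enough to take $h>0$. Let $\psi_\beta(h)=\lim_{\Lambda\uparrow\Z^d}|\Lambda|^{-1}\log Z_\Lambda(\beta,h)$ be the pressure; it exists, is independent of the boundary condition, and is convex in $h$ as a limit of convex functions. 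The first step is the standard equivalence: $\Phi^{\beta,h}$ has a unique Gibbs state if and only if $\psi_\beta$ is differentiable at $h$. To see this, note that $\Phi^{\beta,h}$ is attractive, so FKG gives $\mu^-_{\beta,h}\preceq\mu\preceq\mu^+_{\beta,h}$ in stochastic order for every Gibbs state $\mu$, where $\mu^\pm_{\beta,h}$ are the extremal states obtained with $\pm$ boundary conditions; hence uniqueness holds iff $\mu^+_{\beta,h}=\mu^-_{\beta,h}$. Using a monotone coupling $\hat\mu$ of $\mu^-_{\beta,h}$ and $\mu^+_{\beta,h}$ supported on $\{\hat\omega^-\le\hat\omega^+\}$, equality of the magnetizations $\langle\sigma_0\rangle^-_{\beta,h}=\langle\sigma_0\rangle^+_{\beta,h}$ forces $\hat\omega^-=\hat\omega^+$ almost surely, so $\mu^+_{\beta,h}=\mu^-_{\beta,h}$ is equivalent to $\langle\sigma_0\rangle^-_{\beta,h}=\langle\sigma_0\rangle^+_{\beta,h}$; and $\langle\sigma_0\rangle^\pm_{\beta,h}=\lim_\Lambda|\Lambda|^{-1}\sum_{x\in\Lambda}\langle\sigma_x\rangle^\pm_\Lambda$ are precisely the right, resp.\ left, derivatives of the convex function $\psi_\beta$ at $h$.

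It then remains to show that $\psi_\beta$ is differentiable at every $h>0$. Convexity alone only gives differentiability off an at most countable set, so the genuine input is the Lee--Yang circle theorem. With the variable $z=e^{2h}$, the finite-volume partition function with free boundary condition has the form $Z_\Lambda(\beta,h)=e^{-h|\Lambda|}P_\Lambda(z)$, where $P_\Lambda$ is a polynomial of degree $|\Lambda|$ in $z$, and Lee--Yang asserts that for $\beta\ge0$ all zeros of $P_\Lambda$ lie on $\{|z|=1\}$. For real $h>0$ one has $z=e^{2h}>1$, so there exist a complex disc $U\ni h$ and $c>0$ with $|Z_\Lambda(\beta,\cdot)|\ge c^{|\Lambda|}$ on $U$ uniformly in $\Lambda$; the functions $|\Lambda|^{-1}\log Z_\Lambda$, with the holomorphic branch that is real on $\R$, are then locally uniformly bounded on $U$ and converge on $U\cap\R$, so by a standard Vitali/Montel argument their limit extends holomorphically to $U$, and in particular $\psi_\beta$ is differentiable at $h$. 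Combined with the previous paragraph, this proves that $\Phi^{\beta,h}$ has a unique Gibbs state for every $h\ne0$.

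The only substantial step is the Lee--Yang theorem, and I would prove it by Asano's contraction method: the single-edge partition function, viewed as a polynomial in the two site variables $z_x=e^{2h_x}$, has no zero with both $|z_x|<1$ when $\beta\ge0$, and this property is preserved by taking products and by Asano contractions, which lets one assemble $P_\Lambda$ edge by edge and conclude that on the diagonal $z_x\equiv z$ all its zeros lie on $\{|z|=1\}$. (Alternatively one could run the argument through the random-cluster representation with a ghost vertex encoding the field, using uniqueness of the infinite cluster.) The remaining ingredients of the reduction in the first paragraph---the FKG sandwiching of all Gibbs states between $\mu^\pm_{\beta,h}$, the monotone-coupling equivalence with equality of magnetizations, and the identification of $\langle\sigma_0\rangle^\pm_{\beta,h}$ with the one-sided derivatives of $\psi_\beta$---are routine consequences of the monotonicity of the ferromagnetic Ising model and pose no difficulty.
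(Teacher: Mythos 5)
Your proof is correct and is precisely the argument the paper points to: the paper offers no proof of Proposition \ref{phasetransIsingext} beyond citing \cite[Chapter 3]{FV1}, and the route you take --- FKG sandwiching of all Gibbs states between $\mu^{\pm}_{\beta,h}$, the equivalence of uniqueness with differentiability of the pressure in $h$, and Lee--Yang analyticity off $h=0$ --- is exactly the one given there. Your explicit restriction to $\beta\geq 0$ is in fact necessary (the antiferromagnet in a small uniform field at low temperature has multiple Gibbs states), so your version is, if anything, more careful than the paper's literal statement ``for any $\beta,h\in\R$''.
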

	We notice that the perturbations $\Phi^{\beta,h}\ (h\neq0)$ are not spin-flip symmetric.
	Hence Proposition \ref{phasetransIsingext} says that, for sufficiently large $\beta$, non-spin-flip symmetric perturbations of $\Phi^\beta$ can break the structure of the equilibrium states.
	Moreover, it is known that the uniqueness of the equilibrium state is a `generic' property in the space of interactions (with appropriate norms). This fact is discussed in \cite{DE80}, \cite{Ent81}, \cite{Sok82} and \cite{Isr86}.
	Our main theorem \ref{mainthm} says that spin-flip symmetric perturbations are `not generic' and preserve the coexistence. 
	
	In Section \ref{latticesystems}, we give the rigorous definition of Gibbs states and equilibrium states. 
	We give the definition of the $d$-th order decaying norm $\opnorm{\cdot}$ is Section \ref{sectionmainthm}. In Section \ref{sectionDob}, we give Dobrushin's criterion and deduce Theorem \ref{mainthm} (1) from it.
	We remark the relation between our main theorem and the Pirogov-Sinai theory in Secrion \ref{pirogovsinai}.
	We prove the main theorem at low temperature in Section \ref{sectperturbIsinglow}. We use in the proof some extension of Peierls' argument.
	We notice further questions in Section \ref{furtherquestions}.
	
	The author is grateful to Masayuki Asaoka and Mitsuhiro Shishikura for their helpful advice, and Takehiko Morita, Masaki Tsukamoto and Tom Meyerovitch for their useful comments.
	
	\section{Preliminaries}
	\subsection{Gibbs states and equilibrium states}\label{latticesystems}
	In this section, for a $\Z^d$-lattice system on $\Omega=F^{\Z^d}$ with an interaction $\Phi$, we give the rigorous definition of Gibbs states and equilibrium states. We first define Gibbs states by the thermodynamical condition, called the DLR condition\footnote {This is named after R. L. Dobrushin, O. E. Lanford and D. Ruelle.}. We next define equilibrium states by the variational principle in the theory of dynamical systems and notice that these different definitions are equivalent under the translation invariant condition.
	
	For $\Lambda\Subset\Z^d, \eta\in\Omega_{\Z^d\setminus\Lambda}$ and $\xi_\Lambda\in\Omega_\Lambda$, we write $\xi_{\Lambda}\lor\eta$ for the element of $\Omega$ defined by  $\xi_{\Lambda}\lor\eta|_{\Lambda}=\zeta_{\Lambda}$ and $\xi_{\Lambda}\lor\eta|_{\Z^d\setminus\Lambda}=\eta$. We define the {\bf finite Gibbs state $\mu_{\Phi,\Lambda}^\eta$ in $\Lambda$ with the boundary condition $\eta$} as the Borel probability measure on $\Omega_\Lambda$ defined by
	\begin{equation}\label{Gibbscond}
		\mu_{\Phi,\Lambda}^\eta\left(\{\omega_{\Lambda}\}\right)=\frac{1}{Z_{\Phi,\Lambda}^\eta}\exp\left(-H_{\Phi,\Lambda}(\omega_\Lambda\lor\eta)\right)
	\end{equation}
	for each $\omega_\Lambda\in\Omega_\Lambda$, where
	\begin{equation*}
		Z_{\Phi,\Lambda}^\eta=\sum_{\xi_{\Lambda}\in\Omega_{\Lambda}}\exp\left(-H_{\Phi,\Lambda}(\xi_{\Lambda}\lor\eta)\right)\nonumber.
	\end{equation*}
	We call $Z_{\Phi,\Lambda}^\eta$ the {\bf partition function in $\Lambda$ with the boundary condition $\eta$}.
	Let $C(\Omega)$ be the real Banach space of continuous real-valued functions on $\Omega$ with the supremum norm and $M(\Omega)$ be the set of Borel probability measures on $\Omega$. $M(\Omega)$ can be viewed as a subset of the dual space of the Banach space $C(\Omega)$. Then $M(\Omega)$ is a compact, convex and metrizable space with respect to the weak*-topology.
	
	\begin{dfn}[Gibbs State]
		\label{Gibbs}
$\sigma\in M(\Omega)$ is a Gibbs state for an interaction $\Phi$ if for each $\Lambda\Subset\Z^d$, there exists a Borel probability measure $\sigma_{\Z^d\setminus\Lambda}$ on $\Omega_{\Z^d\setminus\Lambda}$ such that for any $\omega_{\Lambda}\in\Omega_{\Lambda}$,
		$$\sigma(\{\xi\in\Omega\left|\ \xi|_{\Lambda}=\omega_{\Lambda}\right \})=\int_{\Omega_{\Z^d\setminus\Lambda}}\mu_{\Phi,\Lambda}^\eta\left(\{\omega_{\Lambda}\}\right)d\sigma_{\Z^d\setminus\Lambda}(\eta).$$
	\end{dfn}
	Definition \ref{Gibbs} is equivalent to the following: for each $\Lambda\Subset\Z^d$ and $\eta\in\Omega_{\Z^d\setminus\Lambda}$, the conditional probability that $\xi|_{\Lambda}=\omega_{\Lambda}$ under $\xi|_{\Z^d\setminus\Lambda}=\eta$ is $\mu_{\Phi,\Lambda}^\eta\left(\{\omega_{\Lambda}\}\right)$. This condition is called the DLR condition.
	Let $K_{\Phi}$ be the set of Gibbs states for $\Phi$. This is a nonempty, compact and convex subset of $M(\Omega)$ (see \cite[Chapter 1]{Ru}). 
	We notice that $K_{\Phi+C}=K_\Phi$ for any constant interaction $C$ (that is, an interaction such that $C_\Lambda$ is constant for each $\Lambda\Subset\Z^d$).
	
	Next, we state the variational principle in the theory of dynamical systems and give the definition of equilibrium states.
	Let $I$ be the set of Borel probability measures $\mu$ on $\Omega$ which are translation invariant, that is, $\tau^a_*\mu=\mu$ for each $a\in\Z^d$.
	This is a nonempty, compact and convex subset of $M(\Omega)$. For a sequence $\{\Lambda_n\}_{n=1}^{\infty}$ of finite subsets of $\Z^d$, we write $\Lambda_n\nearrow\infty$ (limit in the sense of van Hove) when $|\Lambda_n|\rightarrow\infty$ and, for any $a\in\Z^d$, $|(\Lambda_n+a)\setminus\Lambda_n|\big/|\Lambda_n|\rightarrow0$. An example of such a sequence is $\left\{B(n)\right\}_{n=1}^\infty$, where $B(n)=\{-n,\dots,n\}^d$.
	\begin{dfn}
		\label{entropy}
		For each $\mu\in I$, the limit
		$$h(\mu)=\lim_{n\rightarrow\infty}-\frac{1}{|\Lambda_{n}|}\sum_{\omega\in\Omega_{\Lambda_{n}}}\mu_{\Lambda_n}(\{\omega\})\log\mu_{\Lambda_n}(\{\omega\}),$$
		where $\mu_{\Lambda_n}(\{\omega\})=\mu(\{\xi\in\Omega\left|\  \xi|_{\Lambda_n}=\omega \right \})$, exists and is independent of the choice of sequence $\{\Lambda_n\}_{n=1}^{\infty}$ such that $\Lambda_n\nearrow\infty$ (see {\rm \cite[Chapter 3]{Ru}}). $h(\mu)$ is called the {\bf measure-theoretic entropy} of $\mu$.
	\end{dfn}
	
	\begin{dfn}
		\label{pressure}
		For each interaction $\Phi$ and $\Lambda\Subset\Z^d$, We define the {\bf partition function in $\Lambda$ with free boundary condition} by
		$$
		Z_{\Phi,\Lambda}=\sum_{\omega\in\Omega_{\Lambda}}\exp\left(-\sum_{\Delta\subset\Lambda}\Phi_\Delta(\omega)\right).
		$$
		Then, the limit
		$$P^{\Phi}=\lim_{n\rightarrow\infty}\frac{1}{|\Lambda_{n}|}\log Z_{\Phi,\Lambda_n}$$
		exists and is independent of the choice of sequence $\{\Lambda_n\}_{n=1}^{\infty}$ such that $\Lambda_n\nearrow\infty$ (see {\rm\cite[Chapter 3]{Ru}}). $P^{\Phi}$ is called the {\bf pressure} of $\Phi$.
	\end{dfn}
	
	For an interaction $\Phi$, we define $A_{\Phi}:\Omega\to\R$ by
	$$A_{\Phi}(\omega)=-\sum_{\Lambda}\Phi_{\Lambda}(\omega),$$
	where the sum runs over those $\Lambda\Subset\Z^d$ such that $\Lambda$ contains $0$ in $\Z^d$ and $0$ is the middle element of $\Lambda$ (that is, the $\lfloor \left(|\Lambda|+1\right)/2\rfloor$-th\footnote{For $a\in\R$, $\lfloor a\rfloor$ denotes the largest integer that is not larger than $a$.} element) with respect to the lexicographic order on $\Lambda$. Since $\Phi$ is absolutely summable, $A_\Phi$ is continuous. $-A_{\Phi}(\omega)$ physically represents the contribution of $0\in\Z^d$ to the energy in the configuration $\omega$. The {\bf variational principle} is the statement which connencts entropy, pressure and $A_\Phi$.
	
	\begin{prop}[Variational Principle]
		\label{VP}
		For each interaction $\Phi$,
		$$P^{\Phi}=\sup_{\mu\in I}\left(h(\mu)+\int_{\Omega}A_{\Phi}d\mu\right).$$
		Moreover, there exists some $\mu\in I$ such that $\mu$ achieves the supremum (see {\rm \cite[Theorem 3.12]{Ru}}).
	\end{prop}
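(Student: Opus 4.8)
The plan is to prove the two inequalities $P^\Phi\ge h(\mu)+\int_\Omega A_\Phi\,d\mu$ for every $\mu\in I$, and $P^\Phi\le h(\mu^\ast)+\int_\Omega A_\Phi\,d\mu^\ast$ for a suitable $\mu^\ast\in I$; together these give both the formula and the attainment of the supremum. Throughout, for $\Lambda\Subset\Z^d$ write $E_\Lambda(\omega)=\sum_{\Delta\subset\Lambda}\Phi_\Delta(\omega)$ and $S_\Lambda(\mu)=-\sum_{\omega\in\Omega_\Lambda}\mu_\Lambda(\{\omega\})\log\mu_\Lambda(\{\omega\})$, and (by Definitions \ref{entropy} and \ref{pressure}) compute $h$ and $P^\Phi$ along the cubes $\Lambda_n=B(n)$.

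\emph{First inequality.} Fix $\mu\in I$ and let $\nu_{\Phi,\Lambda}$ be the free-boundary finite Gibbs measure on $\Omega_\Lambda$, $\nu_{\Phi,\Lambda}(\{\omega\})=Z_{\Phi,\Lambda}^{-1}e^{-E_\Lambda(\omega)}$. Non-negativity of the relative entropy of $\mu_\Lambda$ with respect to $\nu_{\Phi,\Lambda}$ rearranges (using that $E_\Lambda$ depends only on $\omega|_\Lambda$) to
$$
S_\Lambda(\mu)-\int_\Omega E_\Lambda\,d\mu\le\log Z_{\Phi,\Lambda}.
$$
Take $\Lambda=\Lambda_n$, divide by $|\Lambda_n|$ and let $n\to\infty$: the first term tends to $h(\mu)$ and the right-hand side to $P^\Phi$, so it suffices to show $|\Lambda_n|^{-1}\int_\Omega E_{\Lambda_n}\,d\mu\to-\int_\Omega A_\Phi\,d\mu$. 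By translation invariance of $\mu$, $|\Lambda_n|\int_\Omega A_\Phi\,d\mu=\int_\Omega\sum_{x\in\Lambda_n}(A_\Phi\circ\tau^x)\,d\mu$, and by the definition of $A_\Phi$ the difference $E_{\Lambda_n}+\sum_{x\in\Lambda_n}(A_\Phi\circ\tau^x)$ is a sum of terms $\Phi_\Delta$ over sets $\Delta$ whose middle site lies in $\Lambda_n$ but which are not contained in $\Lambda_n$; its sup-norm is $o(|\Lambda_n|)$ because the number of sites of $\Lambda_n$ within a fixed distance of $\Z^d\setminus\Lambda_n$ is $o(|\Lambda_n|)$ (the van Hove property) while the remaining sites each contribute only a tail of $\|\Phi\|$ (absolute summability).

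\emph{Second inequality.} Extend each $\nu_{\Phi,\Lambda_n}$ to a Borel probability measure $\tilde\nu_n$ on $\Omega$ by tiling $\Z^d$ by translates of the cube $\Lambda_n$ and placing independent copies of $\nu_{\Phi,\Lambda_n}$ on the tiles, and set $\mu_n=|\Lambda_n|^{-1}\sum_{x\in\Lambda_n}\tau^x_*\tilde\nu_n\in I$. By compactness of $I$, pass to a subsequence along which $\mu_n\to\mu^\ast$ in the weak* topology. Since $A_\Phi\in C(\Omega)$ is fixed, $\int_\Omega A_\Phi\,d\mu^\ast=\lim_n\int_\Omega A_\Phi\,d\mu_n$, and the boundary estimate from the previous step (applied to $\tilde\nu_n$, whose marginal on each tile is $\nu_{\Phi,\Lambda_n}$) gives $\int_\Omega A_\Phi\,d\mu_n=-|\Lambda_n|^{-1}\int_\Omega E_{\Lambda_n}\,d\nu_{\Phi,\Lambda_n}+o(1)$. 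The crucial point is $h(\mu^\ast)\ge\limsup_n|\Lambda_n|^{-1}S_{\Lambda_n}(\nu_{\Phi,\Lambda_n})$: this combines the upper semicontinuity of $h$ on $I$ — which follows from $h(\nu)=\inf_\Lambda|\Lambda|^{-1}S_\Lambda(\nu)$ (infimum over finite boxes $\Lambda$; a subadditivity/translation-invariance argument), each $S_\Lambda$ being weak* continuous as $\Omega_\Lambda$ is finite — with a Fekete-type covering argument (using concavity of $S_\Lambda(\cdot)$ in the measure) that bounds $S_\Lambda(\mu_n)$ below by $|\Lambda|\,|\Lambda_n|^{-1}S_{\Lambda_n}(\tilde\nu_n)$ up to an $o(1)$ boundary error. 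Since $\nu_{\Phi,\Lambda_n}$ maximizes $S_{\Lambda_n}(\cdot)-\int_\Omega E_{\Lambda_n}\,d(\cdot)$, with maximal value $\log Z_{\Phi,\Lambda_n}$, dividing by $|\Lambda_n|$ and combining the two limits gives $h(\mu^\ast)+\int_\Omega A_\Phi\,d\mu^\ast\ge\limsup_n|\Lambda_n|^{-1}\log Z_{\Phi,\Lambda_n}=P^\Phi$.

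The main obstacle is the second inequality, and within it the estimate $h(\mu^\ast)\ge\limsup_n|\Lambda_n|^{-1}S_{\Lambda_n}(\nu_{\Phi,\Lambda_n})$: establishing upper semicontinuity of entropy and running the subadditivity/covering argument that transfers the finite-volume entropy of the non-translation-invariant $\tilde\nu_n$ to the entropy of the translation-invariant limit $\mu^\ast$ is where the genuine content of the van Hove limit sits. The energy estimates, though notationally heavy, are routine consequences of the absolute summability of $\Phi$ and the van Hove property.
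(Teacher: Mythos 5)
The paper does not prove Proposition \ref{VP}; it simply cites \cite[Theorem 3.12]{Ru}, and your sketch is precisely the classical argument given there (and in Israel's and Georgii's books): the finite-volume Gibbs variational principle plus relative entropy for the upper bound, and tiling/averaging of the free-boundary Gibbs measures with upper semicontinuity of $h$ via $h(\nu)=\inf_\Lambda|\Lambda|^{-1}S_\Lambda(\nu)$ for the attainment. Your outline is correct, and you rightly flag that the real work sits in strong subadditivity of $S_\Lambda$ and the covering estimate $S_\Lambda(\mu_n)\gtrsim|\Lambda|\,|\Lambda_n|^{-1}S_{\Lambda_n}(\tilde\nu_n)$, which a complete write-up would have to supply.
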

	
	\begin{dfn}[Equilibrium State]
		An element $\mu\in I$ is an equilibrium state for an interaction $\Phi$ if $\mu$ achieves the supremum of Proposition \ref{VP}.
	\end{dfn}
	Let $I_{\Phi}$ be the set of the equilibrium states for $\Phi$. This is a nonempty, compact and convex subset of $M(\Omega)$. The definituon of the equilibrium states turn out to be a charactarization of the translation invariant Gibbs states, that is, the following proposition holds (see \cite[Theorem 4.2]{Ru}).
	\begin{prop}
		\label{GE}
		For each interaction $\Phi$,
		$$I_{\Phi}=K_{\Phi}\cap I.$$
	\end{prop}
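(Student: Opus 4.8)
The plan is to prove the two inclusions $K_\Phi\cap I\subset I_\Phi$ and $I_\Phi\subset K_\Phi\cap I$ separately. By the Variational Principle (Proposition \ref{VP}) we have $h(\mu)+\int_\Omega A_\Phi\,d\mu\le P^\Phi$ for every $\mu\in I$, with $\mu\in I_\Phi$ exactly when equality holds; so for the first inclusion it is enough to prove the reverse inequality for a translation invariant Gibbs state, and for the second it is enough to show that an equilibrium state satisfies the DLR condition. Throughout I would use two standard consequences of the absolute summability of $\Phi$ and the van Hove property of $\{\Lambda_n\}$: first, $\sup_{\xi\in\Omega}\bigl|H_{\Phi,\Lambda_n}(\xi)+\sum_{x\in\Lambda_n}A_\Phi(\tau^x\xi)\bigr|=o(|\Lambda_n|)$; second, $\sup_{\eta}\bigl|\log Z_{\Phi,\Lambda_n}^\eta-\log Z_{\Phi,\Lambda_n}\bigr|=o(|\Lambda_n|)$, so that $\tfrac1{|\Lambda_n|}\log Z_{\Phi,\Lambda_n}^\eta\to P^\Phi$ uniformly in $\eta$.

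For $K_\Phi\cap I\subset I_\Phi$, take $\mu\in K_\Phi\cap I$. By the DLR condition, $\mu_{\Lambda_n}(\{\omega_{\Lambda_n}\})=\int\mu_{\Phi,\Lambda_n}^\eta(\{\omega_{\Lambda_n}\})\,d\sigma_{\Z^d\setminus\Lambda_n}(\eta)$, where $\sigma_{\Z^d\setminus\Lambda_n}$ is the marginal of $\mu$ on $\Omega_{\Z^d\setminus\Lambda_n}$; since the Shannon entropy $\nu\mapsto-\sum_\omega\nu(\{\omega\})\log\nu(\{\omega\})$ is concave,
\[
-\sum_{\omega_{\Lambda_n}}\mu_{\Lambda_n}(\{\omega_{\Lambda_n}\})\log\mu_{\Lambda_n}(\{\omega_{\Lambda_n}\})\ \ge\ -\int\Bigl(\sum_{\omega_{\Lambda_n}}\mu_{\Phi,\Lambda_n}^\eta(\{\omega_{\Lambda_n}\})\log\mu_{\Phi,\Lambda_n}^\eta(\{\omega_{\Lambda_n}\})\Bigr)d\sigma_{\Z^d\setminus\Lambda_n}(\eta).
\]
Inserting $\log\mu_{\Phi,\Lambda_n}^\eta(\{\omega_{\Lambda_n}\})=-H_{\Phi,\Lambda_n}(\omega_{\Lambda_n}\lor\eta)-\log Z_{\Phi,\Lambda_n}^\eta$ and using the DLR condition again to identify $\int\bigl(\sum_{\omega_{\Lambda_n}}\mu_{\Phi,\Lambda_n}^\eta(\{\omega_{\Lambda_n}\})H_{\Phi,\Lambda_n}(\omega_{\Lambda_n}\lor\eta)\bigr)d\sigma_{\Z^d\setminus\Lambda_n}(\eta)$ with $\int_\Omega H_{\Phi,\Lambda_n}\,d\mu$, the right-hand side equals $\int_\Omega H_{\Phi,\Lambda_n}\,d\mu+\int\log Z_{\Phi,\Lambda_n}^\eta\,d\sigma_{\Z^d\setminus\Lambda_n}(\eta)$. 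By the first estimate above and translation invariance of $\mu$, $\int_\Omega H_{\Phi,\Lambda_n}\,d\mu=-|\Lambda_n|\int_\Omega A_\Phi\,d\mu+o(|\Lambda_n|)$, and by the second estimate $\int\log Z_{\Phi,\Lambda_n}^\eta\,d\sigma_{\Z^d\setminus\Lambda_n}(\eta)=|\Lambda_n|P^\Phi+o(|\Lambda_n|)$. Dividing by $|\Lambda_n|$ and letting $n\to\infty$ gives $h(\mu)+\int_\Omega A_\Phi\,d\mu\ge P^\Phi$, hence equality, hence $\mu\in I_\Phi$.

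The harder inclusion is $I_\Phi\subset K_\Phi\cap I$. The strategy is to bound the free-energy defect $g_\Phi(\mu):=P^\Phi-h(\mu)-\int_\Omega A_\Phi\,d\mu\ \ (\ge 0)$ from below by a quantity that detects the failure of the DLR condition. For each $m$ let $d_m(\mu)\ge 0$ denote the relative entropy density at scale $m$: the $\mu$-expectation, over the outside configuration $\eta\in\Omega_{\Z^d\setminus B(m)}$, of the relative entropy of the conditional law of $\mu$ inside $B(m)$ given $\eta$ with respect to $\mu_{\Phi,B(m)}^\eta$. By consistency of the Gibbs specification, the DLR condition holds for all $\Lambda\Subset\Z^d$ as soon as it holds for all boxes $B(m)$, so $\mu\in K_\Phi$ iff $d_m(\mu)=0$ for every $m$. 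Tiling $\Lambda_n$ by translates of $B(m)$, applying the chain rule for conditional entropy tile by tile together with the finite-volume Gibbs inequality, and averaging over translates using the translation invariance of $\mu$, I would obtain an estimate $g_\Phi(\mu)\ge c_m\,d_m(\mu)$ with a constant $c_m>0$ depending only on $m$ and $\Phi$. If $\mu\in I_\Phi$ then $g_\Phi(\mu)=0$, so $d_m(\mu)=0$ for all $m$, i.e. $\mu\in K_\Phi$; combined with the first inclusion this proves $I_\Phi=K_\Phi\cap I$.

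I expect the main obstacle to be this last tiling and subadditivity step: keeping track of the boundary corrections in the chain-rule decomposition uniformly in $n$ well enough to extract a strictly positive multiple of $d_m(\mu)$. This is in essence the argument carried out in \cite[Chapter 4]{Ru}, which I would follow in detail there.
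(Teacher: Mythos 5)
The paper offers no proof of Proposition \ref{GE} at all; it simply cites \cite[Theorem 4.2]{Ru}. Your outline is precisely the standard Lanford--Ruelle argument from that reference (concavity of Shannon entropy plus the two uniform $o(|\Lambda_n|)$ boundary estimates for the inclusion $K_\Phi\cap I\subset I_\Phi$, and specific relative entropy with a tiling/chain-rule argument for the converse), and it is correct as a sketch, with the one genuinely delicate step --- extracting the constant $c_m>0$ uniformly in $n$ --- honestly identified and deferred to the same source the paper cites.
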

	
	\subsection{$d$-th order decaying interactions}\label{sectionmainthm}
	In this section, we introduce $d$-th order decaying interactions on $\Omega=\{-1.1\}^{\Z^d}$ and give the definition of the $d$-th order decaying norm $\opnorm{\cdot}$ in Theorem \ref{mainthm}.
	
	\begin{dfn}
		\label{decreasing}
		An interaction $\Psi$ is {\bf $d$-th order decaying} if
		$$\opnorm{\Psi}=\sum_{0\in\Lambda\Subset\Z^d}(\diam(\Lambda)+1)^d\sup_{\omega\in\Omega_\Lambda}\left|\Psi_\Lambda(\omega)\right|<\infty.$$
		We call the norm $\opnorm{\cdot}$ the $d$-th order decaying norm.
	\end{dfn}
	It is obvious that
	\begin{equation*}
		\|\Psi\|\leq\opnorm{\Psi}
	\end{equation*}
	for any $d$-th order decaying interaction $\Psi$ on $\Omega$. We see some examples of $d$-th order decaying interactions. Clearly, we have many $d$-th order decaying interactions other than these examples.
	\begin{ex}
		\begin{enumerate}
			\renewcommand{\labelenumi}{\rm{(\arabic{enumi})}}
			\item
			Finite range interactions are clearly $d$-th order decaying.
			\item(Two body interactions.)
			Let $\Psi$ be a translation invariant interaction on $\Omega$ which satisfies
			$$\Psi_{\Lambda}=0$$
			unless $\Lambda$ consists of distinct two points. We emphasize that these two body interactions are not finite range in general.
			We assume that $\Psi$ is $(2d+\varepsilon)$-th order decaying, that is, there exists $\varepsilon>0$ such that
			$$\|\Psi\|_\varepsilon=\sup\left\{\left.\|x-y\|_\infty^{2d+\varepsilon}\left|\Psi_{\{x,y\}}(\omega)\right|\ \right|\ x,y\in\Z^d,\ x\neq y,\ \omega\in\Omega_{\{x,y\}}\right\}<\infty.$$
			We have
			\begin{align*}
				\opnorm{\Psi}=&\sum_{x\in\Z^d\setminus\{0\}}(\|x\|_\infty+1)^d\sup_{\omega\in\Omega_{\{0,x\}}}\left|\Psi_{\{0,x\}}(\omega)\right|\\
				\leq&\sum_{x\in\Z^d\setminus\{0\}}(\|x\|_\infty+1)^d\|x\|_\infty^{-(2d+\varepsilon)}\|\Psi\|_\varepsilon\\
				=&M_\varepsilon\|\Psi\|_\varepsilon,
			\end{align*}
			where $ M_\varepsilon=\sum_{n=1}^\infty((2n+1)^d-(2n-1)^d)(n+1)^dn^{-(2d+\varepsilon)}<\infty$. Hence $\Psi$ is $d$-th order decaying.
		\end{enumerate}
	\end{ex}
	Here, we give some remark on two body interactions. If a two body interaction $\Psi$ has a form
	$\Psi_{\{x,y\}}(\omega)=K_{x,y}\omega(x)\omega(y)$, $K_{x,y}\in\R$,
	a condition for $\Psi$ to have more than one equilibrium state is given in \cite{GGR66}. From this condition, we can see that, 
	if a perturbation interaction $\Psi$ is a two body interaction which has a form as above, the coexistence of equilibrium states holds under weaker assumption than Theorem \ref{mainthm} (2).
	
	\subsection{High temperature cases and Dobrushin's criterion}\label{sectionDob}
	In this section, we give Dobrushin's criterion and see that Theorem \ref{mainthm} (1) follows from this.
	Let $\Phi$ be a translation invariant interaction. For any $x\in\Z^d\setminus\{0\}$, we define
	$$\rho_{\Phi}(x)=\sup\left\{\left|\mu_{\Phi,\{0\}}^\eta(\{1\})-\mu_{\Phi,\{0\}}^\zeta(\{1\})\right|\left|\  \eta, \zeta\in\Omega_{\Z^d\setminus\{0\}},\  \eta|_{\Z^d\setminus\{0,x\}}=\zeta|_{\Z^d\setminus\{0,x\}} \right. \right\}.$$
	\begin{prop}[Dobrushin's Criterion]
		\label{Dob}
		If
		$$\sum_{x\in\Z^d\setminus\{0\}}\rho_{\Phi}(x)<1,$$
		then $\Phi$ has a unique Gibbs state.
	\end{prop}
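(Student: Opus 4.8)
\emph{Proof sketch (proposal).} The plan is to recognize the statement as Dobrushin's classical uniqueness theorem and to recall its proof. Since $F=\{1,-1\}$ has two elements, $|p(\{1\})-q(\{1\})|$ equals the total variation distance $\|p-q\|_{\mathrm{TV}}$ of probability measures $p,q$ on $\{1,-1\}$, so by the translation invariance of $\Phi$ the number $\rho_\Phi(x)$ equals, for any fixed $y\in\Z^d$, the supremum of $\|\mu_{\Phi,\{y\}}^{\eta}-\mu_{\Phi,\{y\}}^{\zeta}\|_{\mathrm{TV}}$ over boundary conditions $\eta,\zeta$ agreeing off $\{y+x\}$; thus, writing $c:=\sum_{x\neq 0}\rho_\Phi(x)$, the hypothesis reads $c<1$. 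The goal is to show that any two Gibbs states for $\Phi$ agree on all local functions, hence coincide (recall $K_\Phi\neq\emptyset$). For $x\in\Z^d$ write $\gamma_x$ for the single-site kernel, $(\gamma_xf)(\omega)=\sum_{s=\pm1}f(\omega^{x\mapsto s})\mu_{\Phi,\{x\}}^{\omega|_{\Z^d\setminus\{x\}}}(\{s\})$ with $\omega^{x\mapsto s}$ equal to $\omega$ off $x$ and to $s$ at $x$, and $\gamma_\Lambda$ for the finite-volume kernel; let $\delta_x(f)$ denote the oscillation of $f$ in the $x$-coordinate, i.e.\ the supremum of $|f(\omega)-f(\omega')|$ over $\omega,\omega'$ agreeing off $x$, so that $\delta_x(f)=0$ for all but finitely many $x$ when $f$ is local.

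Two elementary facts come first. (i) For $\mu,\nu\in M(\Omega)$ and any $f$ with $\sum_x\delta_x(f)<\infty$ one has $|\mu(f)-\nu(f)|\le\sup f-\inf f\le\sum_x\delta_x(f)$, by interpolating between configurations one coordinate at a time. (ii) The single-site contraction, which is the sole place the hypothesis enters: for $x\neq y$,
$$\delta_y(\gamma_xf)\le\delta_y(f)+\rho_\Phi(y-x)\,\delta_x(f),\qquad\delta_x(\gamma_xf)=0,$$
obtained by splitting $\gamma_xf(\omega)-\gamma_xf(\omega')$, for $\omega,\omega'$ agreeing off $y$, into a ``kernel-change'' term $\sum_sf(\omega^{x\mapsto s})\bigl(\mu_{\Phi,\{x\}}^{\omega|_{\Z^d\setminus\{x\}}}(\{s\})-\mu_{\Phi,\{x\}}^{\omega'|_{\Z^d\setminus\{x\}}}(\{s\})\bigr)$, which after subtracting a suitable constant is bounded by $\delta_x(f)\rho_\Phi(y-x)$, plus a ``function-change'' term bounded by $\delta_y(f)$. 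In particular $\sum_y\delta_y(\gamma_xf)\le\sum_y\delta_y(f)+c\,\delta_x(f)$, so summability of the oscillations is preserved throughout.

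The core argument: by the DLR condition for singletons, every Gibbs state $\sigma$ satisfies $\sigma=\sigma\gamma_x$ for each $x$, hence $\sigma=\sigma\gamma_\Lambda$ for each $\Lambda\Subset\Z^d$. Fix two Gibbs states $\mu,\nu$, a local $f$ with support $S$, and $\Lambda_n\uparrow\Z^d$ with $S\subset\Lambda_1$. Since $\gamma_{\Lambda_n}f$ depends only on coordinates off $\Lambda_n$, (i) gives $|\mu(f)-\nu(f)|\le\sum_{y\notin\Lambda_n}\delta_y(\gamma_{\Lambda_n}f)$, which must be shown to vanish as $n\to\infty$. I would write $\gamma_{\Lambda_n}f$ as the pointwise limit of $h_K$, where $h_0=f$ and $h_j=\gamma_{w_j}h_{j-1}$ for a sequence $(w_j)$ in $\Lambda_n$ visiting every site of $\Lambda_n$ infinitely often (the finite-volume Gibbs measure is the unique invariant measure of this Gibbs sampler with frozen exterior); then $\sigma(f)=\sigma(h_K)$ for every Gibbs $\sigma$, and $\delta_y$ is lower semicontinuous under pointwise limits, so it suffices to bound $\sum_{y\notin\Lambda_n}\delta_y(h_K)$ uniformly in $K$. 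Iterating (ii): the total oscillation $T_j=\sum_y\delta_y(h_j)$ obeys $T_j\le T_{j-1}-(1-c)\delta_{w_j}(h_{j-1})$, so $\sum_j\delta_{w_j}(h_{j-1})\le B:=(1-c)^{-1}\sum_x\delta_x(f)$; and since no site outside $\Lambda_n$ is ever updated, $\sum_{y\notin\Lambda_n}\delta_y(h_K)\le\sum_{z\in\Lambda_n}m_z\sum_{y\notin\Lambda_n}\rho_\Phi(y-z)$, where $m_z:=\sum_{j\le K,\,w_j=z}\delta_{w_j}(h_{j-1})$ has $\sum_zm_z\le B$ and, by comparing the inflows to each site, $m_z\le\sum_{x\in S}D_{x,z}\delta_x(f)$ for $D=\sum_{k\ge0}C^k$ with $C_{x,y}=\rho_\Phi(y-x)\mathbf 1[y\neq x]$ (this series converges entrywise with row sums $\le(1-c)^{-1}$, as $C$ has all row sums $c<1$). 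Splitting $\Lambda_n$ into the boundary layer $\partial_r\Lambda_n$ of sites within $\ell^\infty$-distance $r$ of $\Z^d\setminus\Lambda_n$, and its complement, yields
$$\sum_{y\notin\Lambda_n}\delta_y(h_K)\le c\sum_{x\in S}\delta_x(f)\sum_{z\in\partial_r\Lambda_n}D_{x,z}+\Bigl(\sum_{\|v\|_\infty>r}\rho_\Phi(v)\Bigr)B,$$
where the second term is small once $r$ is large (as $\sum_v\rho_\Phi(v)<\infty$), and, for fixed $r$ and $S$, the first term tends to $0$ as $n\to\infty$ (tails of convergent rows of $D$). Letting $K\to\infty$, then $n\to\infty$, then $r\to\infty$ gives $|\mu(f)-\nu(f)|=0$; since local functions are dense in $C(\Omega)$ and $K_\Phi\neq\emptyset$, $\Phi$ has a unique Gibbs state.

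The main obstacle is exactly this third step: the bookkeeping for how the oscillation mass redistributes under repeated single-site updating and then disappears in the limit $\Lambda_n\uparrow\Z^d$, i.e.\ the convergence of the Neumann series $D=\sum_kC^k$ together with the two tail estimates (mass at the boundary sites of $\Lambda_n$, and $\sum_{y\notin\Lambda_n}\rho_\Phi(y-z)$ over interior $z$, each tending to $0$) that kill the leakage term. This is the content of Dobrushin's theorem; the remaining ingredients — the elementary comparison, the single-site contraction, and the density argument — are routine.
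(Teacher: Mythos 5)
The paper gives no proof of Proposition \ref{Dob}; it defers to \cite[Chapter 8]{Geo11}, and your sketch is a correct reconstruction of essentially that standard argument (the single-site oscillation contraction, the Neumann series $D=\sum_k C^k$ of the interdependence matrix, and the two tail estimates that kill the boundary leakage as $\Lambda_n\uparrow\Z^d$). The only step deserving an explicit word is the pointwise convergence $h_K\to\gamma_{\Lambda_n}f$ of the sequential Gibbs sampler, which does hold here because $\|\Phi\|<\infty$ bounds every single-site conditional probability away from $0$, so each full sweep through $\Lambda_n$ has Dobrushin ergodic coefficient uniformly below $1$ while preserving $\mu_{\Phi,\Lambda_n}^{\eta}$; the cited reference sidesteps this by proving the comparison theorem directly, but your variant is sound.
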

	The following is a convenient version of Dobrushis's criterion. 
	We define
	$$\var_{\Lambda}\Phi=\sup_{\omega, \omega'\in\Omega_{\Lambda}}\left|\Phi_{\Lambda}(\omega)-\Phi_{\Lambda}(\omega')\right|$$
	for each $\Lambda\Subset\Z^d$ and
	$$\|\Phi\|_\var=\sum_{0\in\Lambda\Subset\Z^d}(|\Lambda|-1)\var_{\Lambda}\Phi,$$
	where $|\Lambda|$ be the cardinality of $\Lambda$.
	It is obvious that
	\begin{equation}
		\label{comparison}
		\|\Phi\|_\var\leq2\opnorm{\Phi}
	\end{equation}
	for any $d$-th order decaying interaction $\Phi$.
	\begin{prop}
		\label{Dobapp}
		If
		$$\|\Phi\|_\var<2,$$
		then $\Phi$ satisfies Dobrushin's criterion and hence has a unique Gibbs state.
	\end{prop}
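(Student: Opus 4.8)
The plan is to show $\sum_{x\in\Z^d\setminus\{0\}}\rho_\Phi(x)\leq\tfrac12\|\Phi\|_\var$ and then quote Proposition \ref{Dob}. First I would put the single-site finite Gibbs state in logistic form: for $\eta\in\Omega_{\Z^d\setminus\{0\}}$ set $a(\eta)=H_{\Phi,\{0\}}(-1\lor\eta)-H_{\Phi,\{0\}}(1\lor\eta)$, so that by \eqref{Gibbscond} we have $\mu_{\Phi,\{0\}}^\eta(\{1\})=g(a(\eta))$ with $g(t)=(1+e^{-t})^{-1}$. Since $0<g'(t)=e^{-t}(1+e^{-t})^{-2}\leq\tfrac14$ for every $t\in\R$ (the maximum being attained at $t=0$), $g$ is $\tfrac14$-Lipschitz and hence $\bigl|\mu_{\Phi,\{0\}}^\eta(\{1\})-\mu_{\Phi,\{0\}}^\zeta(\{1\})\bigr|\leq\tfrac14\,\bigl|a(\eta)-a(\zeta)\bigr|$ for all $\eta,\zeta\in\Omega_{\Z^d\setminus\{0\}}$.

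Next, fix $x\in\Z^d\setminus\{0\}$ and take $\eta,\zeta\in\Omega_{\Z^d\setminus\{0\}}$ with $\eta|_{\Z^d\setminus\{0,x\}}=\zeta|_{\Z^d\setminus\{0,x\}}$, so that $\eta$ and $\zeta$ differ at most at $x$. Writing $H_{\Phi,\{0\}}(\omega)=\sum_{0\in\Delta\Subset\Z^d}\Phi_\Delta(\omega)$, I would observe that a set $\Delta$ with $x\notin\Delta$ contributes nothing to $a(\eta)-a(\zeta)$, because then $\Phi_\Delta(s\lor\eta)=\Phi_\Delta(s\lor\zeta)$ for $s\in\{1,-1\}$. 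Hence only $\Delta$ with $\{0,x\}\subset\Delta$ survive, and for such $\Delta$ the corresponding summand of $a(\eta)-a(\zeta)$ is the second difference $\bigl[\Phi_\Delta(-1\lor\eta)-\Phi_\Delta(1\lor\eta)\bigr]-\bigl[\Phi_\Delta(-1\lor\zeta)-\Phi_\Delta(1\lor\zeta)\bigr]$, whose absolute value is at most $2\var_\Delta\Phi$ (each of the two brackets being bounded by $\var_\Delta\Phi$). Therefore $|a(\eta)-a(\zeta)|\leq 2\sum_{\{0,x\}\subset\Delta}\var_\Delta\Phi$, and taking the supremum over admissible $\eta,\zeta$ gives $\rho_\Phi(x)\leq\tfrac12\sum_{\{0,x\}\subset\Delta}\var_\Delta\Phi$.

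Finally I would sum over $x$ and interchange the two summations:
\[
\sum_{x\in\Z^d\setminus\{0\}}\rho_\Phi(x)\leq\frac12\sum_{0\in\Delta\Subset\Z^d}\#\{x\in\Delta:x\neq0\}\,\var_\Delta\Phi=\frac12\sum_{0\in\Delta\Subset\Z^d}(|\Delta|-1)\var_\Delta\Phi=\frac12\|\Phi\|_\var .
\]
Under the hypothesis $\|\Phi\|_\var<2$ the left-hand side is strictly less than $1$, so Proposition \ref{Dob} applies and $\Phi$ has a unique Gibbs state. The only place that needs a little care is the second-difference estimate: one must check that sets not containing both $0$ and $x$ drop out of $a(\eta)-a(\zeta)$, and that the factor $2$ in $2\var_\Delta\Phi$ (rather than $\var_\Delta\Phi$) is exactly what the weight $|\Lambda|-1$ in the definition of $\|\cdot\|_\var$ is calibrated to absorb; neither point is difficult, and there is no analytic obstacle beyond this bookkeeping and the elementary bound $g'\leq\tfrac14$.
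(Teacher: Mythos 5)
Your proof is correct. Note that the paper does not actually prove Proposition \ref{Dobapp}; it simply refers to \cite[Chapter 8]{Geo11}, and your argument is essentially the standard one given there, specialized to the two-point spin space: the logistic reformulation $\mu_{\Phi,\{0\}}^\eta(\{1\})=g(a(\eta))$ with the elementary bound $g'\leq\tfrac14$, the observation that only $\Delta\supset\{0,x\}$ contribute to $a(\eta)-a(\zeta)$ with each contribution a second difference bounded by $2\var_\Delta\Phi$, and the final interchange of sums (justified by nonnegativity of the terms) producing exactly the weight $|\Delta|-1$ in $\|\cdot\|_\var$. The numerology $\tfrac14\cdot 2\cdot\|\Phi\|_\var<\tfrac12\cdot 2=1$ matches the threshold $2$ in the statement, so Proposition \ref{Dob} applies and the proof is complete and self-contained.
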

	We refer \cite[Chapter 8]{Geo11} for proofs of the above propositions. It is easy to see that $\|\Phi^\beta\|_\var=4d\beta$. From this, Inequality (\ref{comparison}) and Proposition \ref{Dobapp}, we have Theorem \ref{mainthm} (1).
	
	\subsection{Low temperature cases and Pirogov-Sinai theory}\label{pirogovsinai}
	
	We notice the relation between Theorem \ref{mainthm} (2) and the Pirogov-Sinai theory.
	If $\beta$ is sufficiently large and {\bf a perturbation interaction $\Psi$ is finite range}, then the equilibrium states of $\Phi=\Phi^\beta+\Psi$ can be studied by an application of the Pirogov-Sinai theory. It was introduced by S. A. Pirogov and Ya. G. Sinai in \cite{PS} and sophisticated in \cite{Zah84} to study phase diagrams of more general finite range interactions at low temperature. We refer \cite[Chapter 7]{FV1} for an intelligible introduction of this theory. The periodic ground states of $\Phi^\beta$, that is, the states which have the lowest energy under $\Phi^\beta$, are exactly the constant configurations $\eta^+$ and $\eta^-\in\Omega$, which is constant $1$ and $-1$ on $\Z^d$, respectively, and it can be seen that sufficiently small and spin-flip symmetric perturbations of $\Phi^\beta$ preserves $\eta^+$ and $\eta^-$ as periodic ground states. Hence, by the Pirogov-Sinai theory, if $\Psi$ is sufficiently small, spin-flip symmetric and finite range interaction, then $\Phi=\Phi^\beta+\Psi$ has equilibrium states corresponding to two periodic ground states, $\eta^+$ and $\eta^-$.
	
	The assumption that interactions are finite range is crucial in the Pirogov-Sinai theory. Hence Theorem \ref{mainthm} (2), including results in the case of long range perturbations, does not follows from it. The Pirogov-Sinai theory was extended to some extent in \cite{Par}, but this extention is possible only for sum of finite range interactions and two body interactions satisfying some exponentially decaying condition. 
	It was also extended to quantum cases in \cite{BKU96}, or a series of papers \cite{DFF96} and \cite{DFR96}. In these cases, perturbation interactions can be long range but must exponentially decay.
	Hence Theorem \ref{mainthm} (2), allowing $d$-th order decaying perturbations, also does not follow from these extended version.
	On the other hand, we can not study spin-flip symmetric perturbations (which are long range in general) in detail using the techniques in this paper as well as finite range perturbations using the Pirogov-Sinai theory. We will mention this later (Section \ref{furtherquestions}).
	
	\section{Perturbations of the Ising model at low temperature}\label{sectperturbIsinglow}
	
	\subsection{The main theorem in low temperature cases and reformalization of perturbation interactions}
	In Section \ref{sectperturbIsinglow}, we give a proof of Theorem \ref{mainthm} (2). As we mentioned in Remark \ref{strongerresult}, Theorem \ref{mainthm} (2) follows from stronger results: Theorems \ref{mainthmstronger} and \ref{mainthmconnected} below. We first state these results.
	
	We introduce a norm $\opnorm{\cdot}'$, which is weaker than the $d$-th order decaying norm $\opnorm{\cdot}$. For a translation invariant interaction $\Psi$, we define
	$$
	\opnorm{\Psi}'=\sum_{0\in\Lambda\Subset\Z^d}|\Lambda|^{-1}(\diam(\Lambda)+1)^d\sup_{\omega\in\Omega_\Lambda}\left|\Psi_\Lambda(\omega)\right|.
	$$
	It is clear that $\opnorm{\Phi}'\leq\opnorm{\Phi}$. Then we have the following theorem.
	\begin{thm}\label{mainthmstronger}
		Let $d\geq2$. Then there exists $0<L<\infty$ satisfying the following.
		If $\beta,\delta>0$ satisfy $\beta-\delta>L$, then,
		for any translation invariant interaction $\Psi$ with $\opnorm{\Psi}' <\delta$ which is spin-flip symmetric, $\Phi=\Phi^\beta+\Psi$ has more than one equilibrium state.
	\end{thm}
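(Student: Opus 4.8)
The plan is to run a Peierls-type argument, the new ingredient being that the long range part $\Psi$ must be kept under control along contours by the polynomial weight $(\diam(\cdot)+1)^d$ appearing in $\opnorm{\cdot}'$. Write $\Phi=\Phi^\beta+\Psi$, fix a box $\Lambda_n=B(n)$, and let $\mu^{+}_{\Phi,\Lambda_n}$ and $\mu^{-}_{\Phi,\Lambda_n}$ be the finite Gibbs states in $\Lambda_n$ with the constant boundary conditions $\equiv 1$ and $\equiv -1$, respectively. Since $\Phi^\beta$ is spin-flip symmetric and $\Psi$ is spin-flip symmetric by hypothesis, so is $\Phi$, and $\mu^{-}_{\Phi,\Lambda_n}$ is the image of $\mu^{+}_{\Phi,\Lambda_n}$ under the global spin flip; hence it suffices to prove, uniformly in $n$ and in $x\in\Lambda_n$, the Peierls bound
$$\mu^{+}_{\Phi,\Lambda_n}\bigl(\{\omega:\omega(x)=-1\}\bigr)\le\tfrac13.$$
Granting this, any weak*-subsequential limits $\mu^{\pm}$ of $\mu^{\pm}_{\Phi,\Lambda_n}$ are Gibbs states for $\Phi$, hence so are their translation averages $\bar\mu^{\pm}=\lim_n|B(n)|^{-1}\sum_{a\in B(n)}\tau^a_*\mu^{\pm}$, which moreover lie in $I$; by Proposition \ref{GE} they are equilibrium states, and $\int\omega(0)\,d\bar\mu^{+}\ge\tfrac13>-\tfrac13\ge\int\omega(0)\,d\bar\mu^{-}$, so $\bar\mu^{+}\ne\bar\mu^{-}$ and $\Phi$ has more than one equilibrium state.

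To prove the Peierls bound, extend $\omega\in\Omega_{\Lambda_n}$ by the $+$ configuration outside $\Lambda_n$ and attach to it its contours in the usual way; if $\omega(x)=-1$ then $x$ is enclosed by some contour, so $\mu^{+}_{\Phi,\Lambda_n}(\omega(x)=-1)\le\sum_{\gamma\ni x}\mu^{+}_{\Phi,\Lambda_n}(\gamma\text{ is a contour of }\omega)$, the sum over contours $\gamma$ enclosing $x$. Fix such a $\gamma$, let $V_\gamma$ be the finite region it encloses, and let $\omega\mapsto\omega^\gamma$ be the involution flipping every spin inside $V_\gamma$ and leaving the rest unchanged. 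Comparing $e^{-H_{\Phi,\Lambda_n}(\omega)}$ with $e^{-H_{\Phi,\Lambda_n}(\omega^\gamma)}$ in the partition function (this map is injective) gives the classical estimate
$$\mu^{+}_{\Phi,\Lambda_n}(\gamma\text{ is a contour of }\omega)\le\exp\!\Bigl(-\inf_{\omega}\bigl(H_{\Phi,\Lambda_n}(\omega)-H_{\Phi,\Lambda_n}(\omega^\gamma)\bigr)\Bigr),$$
and $H_{\Phi,\Lambda_n}(\omega)-H_{\Phi,\Lambda_n}(\omega^\gamma)=2\beta|\gamma|+\bigl(H_{\Psi,\Lambda_n}(\omega)-H_{\Psi,\Lambda_n}(\omega^\gamma)\bigr)$, the first term being the usual Ising energy of the contour. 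In the second term, a set $\Delta$ contributes nothing unless it is \emph{split by $V_\gamma$}, i.e.\ $\Delta\cap V_\gamma\neq\emptyset\neq\Delta\setminus V_\gamma$: if $\Delta\cap V_\gamma=\emptyset$ the two configurations agree on $\Delta$, while if $\Delta\subseteq V_\gamma$ then $\omega^\gamma|_\Delta=-\omega|_\Delta$, so $\Psi_\Delta(\omega^\gamma)=\Psi_\Delta(\omega)$ by spin-flip symmetry — this is exactly where the symmetry hypothesis is used. Therefore
$$\bigl|H_{\Psi,\Lambda_n}(\omega)-H_{\Psi,\Lambda_n}(\omega^\gamma)\bigr|\le\sum_{\Delta\text{ split by }V_\gamma}\var_\Delta\Psi\le 2\sum_{\Delta\text{ split by }V_\gamma}\sup_{\xi}|\Psi_\Delta(\xi)|.$$

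The heart of the proof is the geometric estimate $\sum_{\Delta\text{ split by }V_\gamma}\sup_\xi|\Psi_\Delta(\xi)|\le\opnorm{\Psi}'\,|\gamma|$. Writing each $\Delta$ uniquely as a translate $\Delta_0+a$ of a base-pointed shape and using translation invariance of $\Psi$, this reduces to bounding, for each fixed $\Delta_0$, the number $N(\Delta_0)$ of translates of $\Delta_0$ split by $V_\gamma$: if $\Delta_0$ is connected, such a translate contains two neighbouring sites on opposite sides of $\gamma$, hence a bond of $\gamma$, and a given bond lies in at most $|\Delta_0|\le(\diam(\Delta_0)+1)^d$ translates, so $N(\Delta_0)\le|\gamma|(\diam(\Delta_0)+1)^d$; a disconnected $\Delta_0$ is handled similarly, using that the base point of any split translate lies within distance of order $\diam(\Delta_0)$ from the boundary of $V_\gamma$, together with the $d$-th order decay to absorb the resulting polynomial factors. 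Summing $2\sup_\xi|\Psi_{\Delta_0}(\xi)|\cdot N(\Delta_0)$ over base-pointed shapes and recognizing $\sum_{\Delta_0}(\diam(\Delta_0)+1)^d\sup_\xi|\Psi_{\Delta_0}(\xi)|$ (base points counted once, which is precisely the $|\Lambda|^{-1}$-weighted sum defining $\opnorm{\cdot}'$) yields the claim. This matching of the polynomial count of straddling sets against the $d$-th order weight — keeping the constant sharp and controlling disconnected supports — is the step I expect to be the main obstacle; it is also where the $d$-th order, rather than weaker, decay is genuinely needed.

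Combining the estimates, $H_{\Phi,\Lambda_n}(\omega)-H_{\Phi,\Lambda_n}(\omega^\gamma)\ge 2\beta|\gamma|-2\opnorm{\Psi}'|\gamma|>2(\beta-\delta)|\gamma|$, so
$$\mu^{+}_{\Phi,\Lambda_n}(\omega(x)=-1)\le\sum_{\ell\ge\ell_0}\kappa_d(\ell)\,e^{-2(\beta-\delta)\ell},$$
where $\ell_0$ is the minimal contour length and $\kappa_d(\ell)$, the number of contours of length $\ell$ enclosing a fixed site, grows at most exponentially in $\ell$ by Peierls' combinatorial bound. This series is $<\tfrac13$ as soon as $\beta-\delta>L$ for a constant $L=L(d)$ chosen large enough, which gives the required uniform Peierls bound and completes the proof.
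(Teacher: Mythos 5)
Your proposal follows the same core mechanism as the paper---a Peierls argument in which the spin-flip symmetry cancels the contribution of every $\Delta$ that is either contained in or disjoint from $\interior\gamma$, so that only the sets straddling the contour must be counted against $|\gamma|$---but you organize the key counting step differently, and the difference matters for the constant. The paper never counts straddling sets with general (possibly disconnected) supports directly: it first replaces $\Psi$ by the resummed interaction $\widetilde\Psi$ that collects, for each rectangle $R$, all $\Psi_\Lambda$ with minimal enclosing rectangle $R(\Lambda)=R$ into a single term $\widetilde\Psi_R$; Lemma \ref{sameGibbs} shows $\Phi^\beta+\Psi$ and $\Phi^\beta+\widetilde\Psi$ have identical finite-volume Gibbs states, and reordering the sum gives $\|\widetilde\Psi\|\le\opnorm{\Psi}'$ with constant exactly $1$. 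Since rectangles are $\|\cdot\|_1$-connected, every straddling $R$ contains an edge dual to a plaquette of $\gamma$, so the straddling sum is at most $|\gamma|\,\|\widetilde\Psi\|\le|\gamma|\,\opnorm{\Psi}'$, which is Theorem \ref{mainthmconnected} applied to $\widetilde\Psi$. Your direct route reaches the same bound for connected supports, but your treatment of disconnected $\Delta_0$ (``within distance of order $\diam(\Delta_0)$ of the boundary'') yields $N(\Delta_0)\le C_d\,|\gamma|\,(\diam(\Delta_0)+1)^d$ with a dimension-dependent $C_d>1$, hence an energy defect $2C_d\opnorm{\Psi}'|\gamma|$ and a Peierls gap $2(\beta-C_d\delta)|\gamma|$; as written this proves the statement only under $\opnorm{\Psi}'<\delta/C_d$, which does not literally give the theorem when $\delta$ is comparable to $\beta$ (though it does suffice for Theorem \ref{mainthm}(2)). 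The fix is precisely the rectangle observation: if $\Delta_0+a$ straddles $\interior\gamma$ then so does $R(\Delta_0)+a$, which therefore contains the endpoint $u\in\interior\gamma$ of one of the $|\gamma|$ crossing edges, and at most $|R(\Delta_0)|\le(\diam(\Delta_0)+1)^d$ translates satisfy $u\in R(\Delta_0)+a$; this gives $N(\Delta_0)\le|\gamma|(\diam(\Delta_0)+1)^d$ with constant $1$ and recovers the paper's bound. Everything else in your argument (reduction to the $+$ bound by global flip symmetry, injectivity of $\omega\mapsto\omega^\gamma$, the exponential contour count around a fixed site, and translation-averaging the limit Gibbs states before separating them by the magnetization) matches the paper's proof.
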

	Theorem \ref{mainthmstronger} follows from Theorem \ref{mainthmconnected} below. 
	We say that $\Lambda\subset\Z^d$ is {\bf $\|\cdot\|_1$-connected} if for each $x$ and $y\in\Lambda$ there exist finite points $x_0,x_1,\ldots,x_{n-1},x_n\in\Lambda$ such that
	$x_0=x,x_n=y,\|x_{i+1}-x_i\|_1=1$ for $i=0,\ldots,n-1$.
	We call such a finite sequence $x_0,x_1,\ldots ,x_{n-1},x_n$ a {\bf $\|\cdot\|_1$-path} from $x$ to $y$. For an interaction $\Psi$, we say that $\Psi$ is {\bf zero on non-$\|\cdot\|_1$-connected sets}
	if $\Psi_\Lambda\equiv0$ whenever $\Lambda\Subset\Z^d$ is not $\|\cdot\|_1$-connected.
	\begin{thm}\label{mainthmconnected}
		Let $d\geq2$. Then there exists $0<L<\infty$ satisfying the following.
		If $\beta,\delta>0$ satisfy $\beta-\delta>L$, then,
		for any translation invariant interaction $\Psi$ with $\|\Psi\|<\delta$ which is spin-flip symmetric and zero on non-$\|\cdot\|_1$-connected set, $\Phi=\Phi^\beta+\Psi$ has more than one equilibrium state.
	\end{thm}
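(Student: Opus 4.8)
The plan is to prove Theorem~\ref{mainthmconnected} by an extension of Peierls' argument, in which the hypothesis that $\Psi$ be zero on non-$\|\cdot\|_1$-connected sets plays the role that the finite range property plays in the classical proof. Fix $d\geq 2$. For $n\in\N$ let $\mu^+_{\Phi,B(n)}$ denote the finite Gibbs state in $B(n)$ with the constant boundary condition $\eta^+$, and for a measure $\nu$ and $x\in\Z^d$ write $\nu(\omega(x)=1)$ for the $\nu$-measure of $\{\xi:\xi(x)=1\}$. To a configuration $\omega\in\Omega_{B(n)}$, completed by $\eta^+$ outside $B(n)$, associate its Peierls contours --- the maximal connected sets of unit faces of the dual lattice separating pairs of neighbouring sites carrying opposite spins, as in \cite[Chapter 3]{FV1}. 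Since the boundary condition is $\eta^+$, every contour $\gamma$ is finite and encloses a finite interior $\mathrm{Int}(\gamma)\Subset\Z^d$, and if $\omega(0)=-1$ there is a unique outermost contour $\gamma$ with $0\in\mathrm{Int}(\gamma)$. As usual, the theorem will follow once we establish that $\mu^+_{\Phi,B(n)}(\omega(0)=1)$ is bounded below, uniformly in $n$ and in all admissible $\Psi$, by a constant strictly larger than $\tfrac12$.

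The crucial and only non-classical ingredient is the energy cost of erasing a contour. For a contour $\gamma$ with $0\in\mathrm{Int}(\gamma)$, let $\omega^\gamma$ be obtained from $\omega$ by flipping every spin in $\mathrm{Int}(\gamma)$; the map $\omega\mapsto\omega^\gamma$ is injective on the set of configurations for which $\gamma$ is the outermost contour around $0$, since $\gamma$ is a fixed parameter. For the Ising part, only the $|\gamma|$ nearest-neighbour bonds dual to $\gamma$ (which are exactly the incorrect bonds of $\gamma$) change their sign, so $H_{\Phi^\beta,B(n)}(\omega)-H_{\Phi^\beta,B(n)}(\omega^\gamma)=2\beta|\gamma|$. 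For the perturbation, note first that if $\Delta\subseteq\mathrm{Int}(\gamma)$ then $\omega^\gamma|_\Delta=-\omega|_\Delta$, so $\Psi_\Delta(\omega)=\Psi_\Delta(\omega^\gamma)$ by spin-flip symmetry, while if $\Delta\cap\mathrm{Int}(\gamma)=\emptyset$ this equality is trivial; hence only those $\Delta$ having sites both inside and outside $\mathrm{Int}(\gamma)$ can contribute. Such a contributing $\Delta$ is, by hypothesis, $\|\cdot\|_1$-connected, so some $\|\cdot\|_1$-path inside $\Delta$ from $\mathrm{Int}(\gamma)$ to its complement must use a nearest-neighbour bond $e_\Delta\subseteq\Delta$ that is dual to $\gamma$. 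Charging each contributing $\Delta$ to $e_\Delta$ and using translation invariance in the form $\sum_{x\in\Lambda\Subset\Z^d}\sup_{\xi\in\Omega_\Lambda}|\Psi_\Lambda(\xi)|=\|\Psi\|$, we obtain
$$
\bigl|H_{\Psi,B(n)}(\omega)-H_{\Psi,B(n)}(\omega^\gamma)\bigr|\;\leq\;\sum_{e\text{ dual to }\gamma}\ \sum_{\Lambda\supseteq e}2\sup_{\xi\in\Omega_\Lambda}|\Psi_\Lambda(\xi)|\;\leq\;2|\gamma|\,\|\Psi\|\;<\;2\delta|\gamma|,
$$
and therefore $H_{\Phi,B(n)}(\omega)-H_{\Phi,B(n)}(\omega^\gamma)\geq 2(\beta-\delta)|\gamma|$.

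Granting this, the remaining steps are classical. The Gibbs weight ratio $w(\omega)/w(\omega^\gamma)$ equals $\exp(-(H_{\Phi,B(n)}(\omega)-H_{\Phi,B(n)}(\omega^\gamma)))\leq e^{-2(\beta-\delta)|\gamma|}$, and, by injectivity of $\omega\mapsto\omega^\gamma$, this gives $\mu^+_{\Phi,B(n)}(\gamma\text{ is the outermost contour around }0)\leq e^{-2(\beta-\delta)|\gamma|}$. Summing over all contours surrounding $0$, with the purely geometric bound $\#\{\gamma:|\gamma|=\ell,\ 0\in\mathrm{Int}(\gamma)\}\leq \ell\,c_d^{\,\ell}$ for a constant $c_d$ depending only on $d$, we get
$$
\mu^+_{\Phi,B(n)}(\omega(0)=-1)\;\leq\;\sum_{\ell\geq 2d}\ell\,c_d^{\,\ell}\,e^{-2(\beta-\delta)\ell}.
$$
We then fix $L<\infty$, depending only on $d$, large enough that the right-hand side is $<\tfrac13$ whenever $\beta-\delta>L$; note that $L$ is independent of $n$, $\Psi$, $\beta$ and $\delta$. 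The identical estimate, with $0$ replaced by any fixed $x\in\Z^d$, holds for all $n$ large enough, so $\mu^+_{\Phi,B(n)}(\omega(x)=1)>\tfrac23$ for such $n$.

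Finally we pass to the limit. Let $\mu^+$ be a weak*-limit point of $\{\mu^+_{\Phi,B(n)}\}_n$; it is a Gibbs state for $\Phi$ with $\mu^+(\omega(x)=1)\geq\tfrac23$ for every $x\in\Z^d$. Since $\Phi$ is translation invariant, each translate $\tau^a_*\mu^+$ is again a Gibbs state for $\Phi$, hence so is any weak*-limit point $\overline{\mu}^+$ of the translation averages $|B(n)|^{-1}\sum_{a\in B(n)}\tau^a_*\mu^+$; this $\overline{\mu}^+$ is translation invariant, so by Proposition~\ref{GE} it is an equilibrium state for $\Phi$, and $\overline{\mu}^+(\omega(0)=1)\geq\tfrac23$. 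Because $\Phi=\Phi^\beta+\Psi$ is spin-flip symmetric (both summands are), the spin-flip image of $\overline{\mu}^+$ is also an equilibrium state for $\Phi$, and it assigns $\{\omega(0)=1\}$ measure $\leq\tfrac13$; hence it differs from $\overline{\mu}^+$, and $\Phi$ has more than one equilibrium state. The only genuine obstacle in this programme is the perturbation energy bound of the second paragraph: for a general long-range $\Psi$ it simply fails, since a disconnected $\Psi$ can directly couple the deep interior of $\mathrm{Int}(\gamma)$ to its exterior, making the erasure cost grow like the \emph{volume} rather than the surface area of $\gamma$ and overwhelming the Ising gain $2\beta|\gamma|$. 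The hypothesis that $\Psi$ vanish on non-$\|\cdot\|_1$-connected sets is exactly what forces every contributing $\Delta$ to cross $\gamma$ through a nearest-neighbour bond, so that the single norm $\|\Psi\|$ suffices to control the cost; it is, for this argument, the analogue of the finite range assumption in the Pirogov-Sinai theory.
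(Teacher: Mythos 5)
Your proposal is correct and follows essentially the same route as the paper: a Peierls argument in which the contour-erasure map costs exactly $2\beta|\gamma|$ for the Ising part and at most $2\delta|\gamma|$ for the perturbation, the latter because spin-flip symmetry kills every $\Delta$ contained in or disjoint from $\interior\gamma$ and the $\|\cdot\|_1$-connectedness hypothesis forces every crossing $\Delta$ to contain an edge dual to a plaquette of $\gamma$, so that the cost is controlled by $|\gamma|\,\|\Psi\|$. The only cosmetic differences are that you sum over outermost contours and obtain the second equilibrium state as the spin-flip image of the first, whereas the paper uses a union bound over all contours surrounding $x$ (Lemma \ref{keylemma}), proves the contour-counting bound you merely cite (Lemma \ref{numberofcontour}), and runs the whole construction separately for the $-$ boundary condition.
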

	
	We see that how Theorem \ref{mainthmstronger} reduces to Theorem \ref{mainthmconnected}.
	For this purpose, we reformalize a perturbation interaction. Let $\Psi$ be a translation invariant interaction with $\opnorm{\Psi}'<\infty$.
	For $R\Subset\Z^d$, we say $R$ is a {\bf rectangle} if $R=[a_1,b_1]\times\cdots\times[a_d,b_d]\cap\Z^d$ for some $a_1\leq b_1,\ldots,a_d\leq b_d\in\Z$. For each finite $\Lambda\Subset\Z^d$, we write $R(\Lambda)$ for the minimal rectangle which contains $\Lambda$ with respect to the partial order given by inclusion. We notice that $\diam(\Lambda)=\diam(R(\Lambda))$. For each rectangle $R$, we define $\mathscr{S}(R)=\left\{\Lambda\Subset\Z^d\left| R(\Lambda)=R\right.\right\}$.
	For $\Psi$ as above, we define an interaction $\widetilde{\Psi}$ as
	$$\widetilde{\Psi}_\Lambda\equiv0$$
	if $\Lambda$ is not a rectangle and
	$$\widetilde{\Psi}_R(\omega)=\sum_{\Lambda\in\mathscr{S}(R)}\Psi_\Lambda(\omega|_\Lambda)$$
	for any rectangle $R\Subset\Z^d$ and $\omega\in\Omega_R$. This interaction $\widetilde{\Psi}$ is made by putting interactions $\Psi_\Lambda$ on $\Lambda\in\mathscr{S}(R)$ together into the rectangle $R$. Obviously $\widetilde{\Psi}$ is a translation invariant interaction. We write $\mathscr{S}_0$ for the set of $\Lambda\Subset\Z^d$ such that $\Lambda$ contains $0$ in $\Z^d$ and $0$ is the middle element of $\Lambda$ (that is, the $\left[ \left(\lfloor\Lambda\rfloor+1\right)/2\right]$-th element) with respect to the lexicographic order on $\Lambda$. Then, by the translation-invariance of $\widetilde{\Psi}$, we have
	\begin{align}\label{opnorm}
		\|\widetilde{\Psi}\|=&\sum_{0\in R\Subset\Z^d, R:{\rm rectangle}}\sup_{\omega\in\Omega_R}|\widetilde{\Psi}_R(\omega)| \nonumber\\
		=&\sum_{\substack{0\in R\Subset\Z^d,R:{\rm rectangle},\\R\in\mathscr{S}_0}}|R|\sup_{\omega\in\Omega_R}|\widetilde{\Psi}_R(\omega)| \nonumber\\
		\leq&\sum_{\substack{0\in R\Subset\Z^d,R:{\rm rectangle},\\R\in\mathscr{S}_0}}|R|\sum_{\Lambda\in\mathscr{S}(R)}\sup_{\omega\in\Omega_\Lambda}|\Psi_\Lambda(\omega)| \nonumber\\
		=&\sum_{\substack{\Lambda\Subset\Z^d,\\R(\Lambda)\in\mathscr{S}_0}}|R(\Lambda)|
		\sup_{\omega\in\Omega_\Lambda}|\Psi_\Lambda(\omega)| \nonumber\\
		\stackrel{(*)}{=}&\sum_{0\in\Lambda\Subset\Z^d}|\Lambda|^{-1}|R(\Lambda)|
		\sup_{\omega\in\Omega_\Lambda}|\Psi_\Lambda(\omega)| \nonumber\\
		\leq&\sum_{0\in\Lambda\Subset\Z^d}|\Lambda|^{-1}\left(\diam(\Lambda)+1\right)^d
		\sup_{\omega\in\Omega_\Lambda}|\Psi_\Lambda(\omega)| \nonumber\\
		=&\ \opnorm{\Psi}'.
	\end{align}
	Here the equation $(*)$ holds since $\Psi$ is translation invariant and, for each  $\Lambda\Subset\Z^d$, there is exactly one translation of $\Lambda$ that appears in the summation $\sum_{\Lambda\Subset\Z^d,R(\Lambda)\in\mathscr{S}_0}$. The following lemma shows that $\widetilde{\Psi}$ and $\Psi$ induces the same statistical systems.
	\begin{lem}\label{sameGibbs}
		Let $\Psi$ and $\widetilde{\Psi}$ be the interactions as above.
		For any interaction $\Phi_0$, interactions $\Phi=\Phi_0+\Psi$ and $\widetilde{\Phi}=\Phi_0+\widetilde{\Psi}$ have the same Gibbs states.
	\end{lem}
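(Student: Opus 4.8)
The strategy is to prove the stronger statement that $\Phi$ and $\widetilde{\Phi}$ determine the same finite Gibbs states: $\mu_{\Phi,\Lambda}^\eta=\mu_{\widetilde{\Phi},\Lambda}^\eta$ for every $\Lambda\Subset\Z^d$ and every $\eta\in\Omega_{\Z^d\setminus\Lambda}$. By Definition \ref{Gibbs} this forces $K_\Phi=K_{\widetilde{\Phi}}$, since the DLR condition refers only to these conditional measures (one may use the same $\sigma_{\Z^d\setminus\Lambda}$ for both interactions). Because $\mu_{\Phi,\Lambda}^\eta(\{\omega_\Lambda\})$ is proportional, as a function of $\omega_\Lambda$, to $\exp(-H_{\Phi,\Lambda}(\omega_\Lambda\lor\eta))$, with the partition function as normalizing constant, it suffices to show that $H_{\Phi,\Lambda}(\omega_\Lambda\lor\eta)-H_{\widetilde{\Phi},\Lambda}(\omega_\Lambda\lor\eta)$ is a function $c(\eta)$ of $\eta$ alone, not of $\omega_\Lambda$: the factor $e^{-c(\eta)}$ then cancels between the numerator and the partition function. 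Since $\widetilde{\Phi}-\Phi=\widetilde{\Psi}-\Psi$, the $\Phi_0$ part drops out, so I only need to compare $H_{\Psi,\Lambda}$ and $H_{\widetilde{\Psi},\Lambda}$.

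The hypothesis $\opnorm{\Psi}'<\infty$ is what makes everything summable: since $|\Lambda|\le(\diam(\Lambda)+1)^d$ one has $\|\Psi\|\le\opnorm{\Psi}'$, and by the computation (\ref{opnorm}) also $\|\widetilde{\Psi}\|\le\opnorm{\Psi}'$, so the Hamiltonians below converge absolutely. The key step is the rearrangement
\[
	H_{\widetilde{\Psi},\Lambda}(\omega)=\sum_{\substack{R\cap\Lambda\neq\emptyset\\ R:\,{\rm rectangle}}}\widetilde{\Psi}_R(\omega)=\sum_{\substack{R\cap\Lambda\neq\emptyset\\ R:\,{\rm rectangle}}}\ \sum_{\Delta\in\mathscr{S}(R)}\Psi_\Delta(\omega)=\sum_{R(\Delta)\cap\Lambda\neq\emptyset}\Psi_\Delta(\omega),
\]
obtained by grouping each $\Delta\Subset\Z^d$ under the unique rectangle $R(\Delta)\supset\Delta$. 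This is legitimate because, for a fixed $z\in\Z^d$, the translates $\Delta$ of any given shape with $z\in R(\Delta)$ number exactly $|R(\Delta)|\le(\diam(\Delta)+1)^d$, so $\sum_{R(\Delta)\cap\Lambda\neq\emptyset}\sup_\omega|\Psi_\Delta(\omega)|\le|\Lambda|\,\opnorm{\Psi}'<\infty$ — this is exactly where the $d$-th order decaying weight is needed. Since $\Delta\subset R(\Delta)$, every $\Delta$ with $\Delta\cap\Lambda\neq\emptyset$ already appears in the last sum, and therefore
\[
	H_{\widetilde{\Psi},\Lambda}(\omega)-H_{\Psi,\Lambda}(\omega)=\sum_{\substack{R(\Delta)\cap\Lambda\neq\emptyset\\ \Delta\cap\Lambda=\emptyset}}\Psi_\Delta(\omega).
\]

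In every term of the last sum $\Delta\subset\Z^d\setminus\Lambda$, so $\Psi_\Delta(\omega)=\Psi_\Delta(\omega|_\Delta)$ depends only on $\eta=\omega|_{\Z^d\setminus\Lambda}$; being an absolutely convergent sum of such terms, the whole expression is a function of $\eta$ alone. Adding back the identical $\Phi_0$ contributions to both Hamiltonians gives $H_{\Phi,\Lambda}(\omega_\Lambda\lor\eta)-H_{\widetilde{\Phi},\Lambda}(\omega_\Lambda\lor\eta)=c(\eta)$, and the cancellation noted above yields $\mu_{\Phi,\Lambda}^\eta=\mu_{\widetilde{\Phi},\Lambda}^\eta$, hence $K_\Phi=K_{\widetilde{\Phi}}$. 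I expect the only real difficulty to be organizational — justifying the displayed rearrangement from absolute summability and tracking which $\Delta$'s occur in $H_{\Psi,\Lambda}$ as opposed to $H_{\widetilde{\Psi},\Lambda}$, i.e.\ that $\{\Delta:\Delta\cap\Lambda\neq\emptyset\}\subset\{\Delta:R(\Delta)\cap\Lambda\neq\emptyset\}$; there is no analytic obstacle.
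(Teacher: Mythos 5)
Your proposal is correct and follows essentially the same route as the paper: both show that the finite Gibbs states $\mu_{\Phi,\Lambda}^\eta$ and $\mu_{\widetilde{\Phi},\Lambda}^\eta$ coincide by regrouping $\sum_{R\cap\Lambda\neq\emptyset}\widetilde{\Psi}_R$ as a sum over $\Delta$ with $R(\Delta)\cap\Lambda\neq\emptyset$, splitting off the terms with $\Delta\cap\Lambda=\emptyset$, and observing that these depend only on $\eta$ and hence cancel against the partition function. Your added attention to absolute summability of the rearranged sums is a point the paper passes over silently, but it does not change the argument.
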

	
	In particular, by Proposition {\rm\ref{GE}}, $\Phi$ and $\widetilde{\Phi}$ have the same equilibrium states.
	
	\begin{proof}[{\rm Proof}]
		We take arbitrary $\Lambda\Subset\Z^d$ and $\eta\in\Omega_{\Z^d\setminus\Lambda}$. For any $\omega_\Lambda\in\Omega_\Lambda$, we have, by (\ref{Gibbscond}),
		\begin{align*}
			&\mu_{\widetilde{\Phi},\Lambda}^\eta(\{\omega_\Lambda\})\\
			=&\frac{\displaystyle\exp\left(-\sum_{\Delta\in\mathscr{F}_\Lambda}\left(\Phi_{\Delta}\left(\omega_{\Lambda}\lor\eta\right)+\widetilde{\Psi}_\Delta\left(\omega_\Lambda\lor\eta\right)\right)\right)}{\displaystyle\sum_{\xi_{\Lambda}\in\Omega_{\Lambda}}\exp\left(-\sum_{\Delta\in\mathscr{F}_\Lambda}\left(\Phi_\Delta\left(\xi_{\Lambda}\lor\eta\right)+\widetilde{\Psi}_\Delta\left(\xi_\Lambda\lor\eta\right)\right)\right)}\\
			=&\frac{\displaystyle\exp\left(-\sum_{\Delta\in\mathscr{F}_\Lambda}\Phi_\Delta\left(\omega_{\Lambda}\lor\eta\right)-\sum_{R\in\mathscr{R}_\Lambda}\sum_{S\in\mathscr{S}(R)}\Psi_S\left(\omega_\Lambda\lor\eta\right)\right)}{\displaystyle\sum_{\xi_{\Lambda}\in\Omega_{\Lambda}}\exp\left(-\sum_{\Delta\in\mathscr{F}_\Lambda}\Phi_\Delta\left(\xi_{\Lambda}\lor\eta\right)-\sum_{R\in\mathscr{R}_\Lambda}\sum_{S\in\mathscr{S}(R)}\Psi_S\left(\xi_\Lambda\lor\eta\right)\right)},
		\end{align*}
		where $\mathscr{R}_\Lambda=\left\{\left. R\Subset\Z^d \right|R\textrm{ is a rectangle and }R\cap\Lambda\neq\emptyset\right\}.$
		Here, for any $\xi_\Lambda\in\Omega_\Lambda$,
		\begin{align*}
			&\sum_{R\in\mathscr{R}_\Lambda}\sum_{S\in\mathscr{S}(R)}\Psi_S\left(\xi_\Lambda\lor\eta\right)\\
			=&\sum_{R\in\mathscr{R}_\Lambda}\sum_{S\in\mathscr{S}(R)\cap\mathscr{F}_\Lambda}\Psi_S\left(\xi_\Lambda\lor\eta\right)
			+\sum_{R\in\mathscr{R}_\Lambda}\sum_{S\in\mathscr{S}(R)\setminus\mathscr{F}_\Lambda}\Psi_S\left(\eta\right)\\
			=&\sum_{\Delta\in\mathscr{F}_\Lambda}\Psi_\Delta\left(\xi_\Lambda\lor\eta\right)
			+\sum_{R\in\mathscr{R}_\Lambda}\sum_{S\in\mathscr{S}(R)\setminus\mathscr{F}_\Lambda}\Psi_S\left(\eta\right)
		\end{align*}
		and the second term is independent of $\xi_\Lambda$. Hence we have
		\begin{align*}
			&\mu_{\widetilde{\Phi},\Lambda}^\eta(\{\omega_\Lambda\})\\
			=&\frac{\displaystyle\exp\left(-\sum_{\Delta\in\mathscr{F}_\Lambda}\Phi_\Delta\left(\omega_{\Lambda}\lor\eta\right)-\sum_{R\in\mathscr{R}_\Lambda}\sum_{S\in\mathscr{S}(R)}\Psi_S\left(\omega_\Lambda\lor\eta\right)\right)}{\displaystyle\sum_{\xi_{\Lambda}\in\Omega_{\Lambda}}\exp\left(-\sum_{\Delta\in\mathscr{F}_\Lambda}\Phi_\Delta\left(\xi_{\Lambda}\lor\eta\right)-\sum_{R\in\mathscr{R}_\Lambda}\sum_{S\in\mathscr{S}(R)}\Psi_S\left(\xi_\Lambda\lor\eta\right)\right)}\\
			=&\frac{\displaystyle\exp\left(-\sum_{\Delta\in\mathscr{F}_\Lambda}\left(\Phi_\Delta\left(\omega_{\Lambda}\lor\eta\right)+\Psi_\Delta\left(\omega_\Lambda\lor\eta\right)\right)\right)}{\displaystyle\sum_{\xi_{\Lambda}\in\Omega_{\Lambda}}\exp\left(-\sum_{\Delta\in\mathscr{F}_\Lambda}\left(\Phi_\Delta\left(\xi_{\Lambda}\lor\eta\right)+\Psi_\Delta\left(\xi_\Lambda\lor\eta\right)\right)\right)}\quad\\
			=&\ \mu_{\Phi,\Lambda}^\eta(\{\omega_\Lambda\}).
		\end{align*}
		This implies that each finite Gibbs state coinsides for $\Phi$ and $\widetilde{\Phi}$. Hence the statement holds.
	\end{proof}
	Obviously $\widetilde{\Psi}$ is zero on non-$\|\cdot\|_1$-connected sets. Hence, by replacing $\Psi$ with $\widetilde{\Psi}$ in Theorem \ref{mainthmstronger} and using (\ref{opnorm}) and Lemma \ref{sameGibbs}, we see that Theorem \ref{mainthmstronger} follows from Theorem \ref{mainthmconnected}.
	
	\subsection{Gibbs states with the $+,-$-boundary condition}
	We will give a proof of Theorem \ref{mainthmconnected}.
	Let $\Phi$ be a translation invariant interaction such that $\|\Phi\|<\infty$.
	For $\Lambda\Subset\Z^d$, we define the {\bf finite Gibbs state with the $+$} and {\bf $-$-boundary condition}: $\mu_{\Phi,\Lambda}^+$ and $\mu_{\Phi,\Lambda}^-$ as the Borel probability measures on $\Omega_\Lambda$ defined by (\ref{Gibbscond}) with $\eta=\eta_{\Z^d\setminus\Lambda}^+$ and $\eta_{\Z^d\setminus\Lambda}^-$: the constant $1$ and $-1$ configurations on $\Z^d\setminus\Lambda$, respectively.
	That is, 
	\begin{align}\label{plusminusgibbs}
		&\mu_{\Phi,\Lambda}^+\left(\{\omega\}\right)=\frac{1}{Z_{\Phi,\Lambda}^+}\exp\left(-\sum_{\Delta\in\mathscr{F}_\Lambda}\Phi_\Delta(\omega\lor\eta_{\Z^d\setminus\Lambda}^+)\right)\text{ and }\nonumber\\
		&\mu_{\Phi,\Lambda}^-\left(\{\omega\}\right)=\frac{1}{Z_{\Phi,\Lambda}^-}\exp\left(-\sum_{\Delta\in\mathscr{F}_\Lambda}\Phi_\Delta(\omega\lor\eta_{\Z^d\setminus\Lambda}^-)\right)
	\end{align}
	for each $\omega\in\Omega_\Lambda$, where 
	\begin{align*}
		&Z_{\Phi,\Lambda}^+=\sum_{\omega\in\Omega_\Lambda}\exp\left(-\sum_{\Delta\in\mathscr{F}_\Lambda}\Phi_\Delta(\omega\lor\eta_{\Z^d\setminus\Lambda}^+)\right)\text{ and }\\
		&Z_{\Phi,\Lambda}^-=\sum_{\omega\in\Omega_\Lambda}\exp\left(-\sum_{\Delta\in\mathscr{F}_\Lambda}\Phi_\Delta(\omega\lor\eta_{\Z^d\setminus\Lambda}^-)\right).
	\end{align*}
	
	Here, we state the key proposition: Proposition \ref{lowtempmain} and see Theorem \ref{mainthmconnected} follows from it.
	We say that $\Lambda\subset\Z^d$ is {\bf $\|\cdot\|_\infty$-connected} if for each $x$ and $y\in\Lambda$ there exist finite points $x_0,x_1,\ldots,x_{n-1},x_n\in\Lambda$ such that
	$x_0=x,x_n=y,\|x_{i+1}-x_i\|_\infty=1$ for $i=0,\ldots,n-1$.
	We call such a finite sequence $x_0,x_1,\ldots ,x_{n-1},x_n$ a {\bf $\|\cdot\|_\infty$-path} from $x$ to $y$. Moreover, we say that $\Lambda$ is {\bf c-connected} if $\Lambda$ and $\Z^d\setminus\Lambda$ are $\|\cdot\|_\infty$-connected.
	\begin{prop}\label{lowtempmain}
		Let $d\geq2$. Then there exists sufficiently large $0<L<\infty$ and $0<\varepsilon(L)<1/2$ with $\varepsilon(L)\to0$ as $L\to\infty$ satisfying the following.
		Let $\beta,\delta>0$ and $\beta-\delta>L$ and $\Psi$ be a translation invariant interaction with $\|\Psi\|<\delta$ which is spin-flip symmetric and zero on non-$\|\cdot\|_1$-connected sets. We write $\Phi=\Phi^\beta+\Psi$. Then, for any c-connected $\Lambda\Subset\Z^d$ and $x\in\Lambda$, we have
		$$\mu_{\Phi,\Lambda}^+\left(\left\{\omega\in\Omega_\Lambda\left|\omega(x)=1\right.\right\}\right)>1-\varepsilon(L) \quad\text{and}\quad\mu_{\Phi,\Lambda}^-\left(\left\{\omega\in\Omega_\Lambda\left|\omega(x)=-1\right.\right\}\right)>1-\varepsilon(L).$$
	\end{prop}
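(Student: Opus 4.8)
The plan is to run Peierls' argument, adapted so as to absorb the long-range perturbation $\Psi$. Since $\Phi^\beta$ and $\Psi$ are spin-flip symmetric, so is $\Phi=\Phi^\beta+\Psi$, and the global flip $\omega\mapsto-\omega$ carries $\mu_{\Phi,\Lambda}^+$ to $\mu_{\Phi,\Lambda}^-$; thus $\mu_{\Phi,\Lambda}^-(\{\omega(x)=-1\})=\mu_{\Phi,\Lambda}^+(\{\omega(x)=1\})$ and it suffices to prove the first inequality, i.e. $\mu_{\Phi,\Lambda}^+(\{\omega(x)=-1\})<\varepsilon(L)$ for every c-connected $\Lambda\Subset\Z^d$ and $x\in\Lambda$. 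Fix such a $\Lambda$ and $x$, and write $\eta^+$ for $\eta_{\Z^d\setminus\Lambda}^+$. To each $\omega\in\Omega_\Lambda$ we attach, by the standard construction of Peierls contours of $\omega\lor\eta^+$ (see \cite[Chapter 3]{FV1}), the family $\Gamma(\omega)$ of $\|\cdot\|_\infty$-connected components of the set of unit plaquettes dual to disagreement bonds. Because $\Lambda$ is c-connected, $\Z^d\setminus\Lambda$ is $\|\cdot\|_\infty$-connected and carries only $+$ spins, so for every contour $\gamma$ the interior $\interior(\gamma)$ (the side not joined to infinity) satisfies $\interior(\gamma)\subset\Lambda$, and if $\omega(x)=-1$ then some $\gamma\in\Gamma(\omega)$ encloses $x$. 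A union bound then gives
\[
\mu_{\Phi,\Lambda}^+(\{\omega(x)=-1\})\le\sum_{\gamma\text{ encloses }x}\mu_{\Phi,\Lambda}^+(\{\omega:\gamma\in\Gamma(\omega)\}).
\]

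Fix a contour $\gamma$ enclosing $x$ and let $F_\gamma\colon\omega\mapsto\omega^\gamma$ reverse all spins in $\interior(\gamma)$; since $\interior(\gamma)\subset\Lambda$ this maps $\{\omega\in\Omega_\Lambda:\gamma\in\Gamma(\omega)\}$ injectively into $\Omega_\Lambda$. Bounding $Z_{\Phi,\Lambda}^+$ from below by the sum of $\exp(-\sum_{\Delta\in\mathscr{F}_\Lambda}\Phi_\Delta(\omega^\gamma\lor\eta^+))$ over those $\omega$ and using $(\sum_i a_i)/(\sum_i b_i)\le\max_i(a_i/b_i)$, we obtain from (\ref{plusminusgibbs})
\[
\mu_{\Phi,\Lambda}^+(\{\gamma\in\Gamma(\omega)\})\le\sup_{\omega\,:\,\gamma\in\Gamma(\omega)}\exp\Big(-\sum_{\Delta\in\mathscr{F}_\Lambda}\big(\Phi_\Delta(\omega\lor\eta^+)-\Phi_\Delta(\omega^\gamma\lor\eta^+)\big)\Big).
\]
We split the exponent using $\Phi=\Phi^\beta+\Psi$. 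The Ising contribution is classical: only the disagreement bonds $\{u,v\}$ dual to plaquettes of $\gamma$ contribute, each by $2\beta$, for a total $2\beta|\gamma|$, where $|\gamma|$ is the number of plaquettes of $\gamma$. The perturbation contribution is where both hypotheses on $\Psi$ enter. A term $\Psi_\Delta(\omega\lor\eta^+)-\Psi_\Delta(\omega^\gamma\lor\eta^+)$ vanishes unless $\Delta\cap\interior(\gamma)\ne\emptyset$; if moreover $\Delta\subset\interior(\gamma)$ then $\omega^\gamma$ and $\omega$ restricted to $\Delta$ differ by a global flip, so the term vanishes by spin-flip symmetry; and if $\Delta$ is not $\|\cdot\|_1$-connected then $\Psi_\Delta\equiv0$. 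For a $\|\cdot\|_1$-connected $\Delta$ with $\emptyset\ne\Delta\cap\interior(\gamma)\ne\Delta$, a $\|\cdot\|_1$-path inside $\Delta$ from a point of $\interior(\gamma)$ to a point outside it has a first crossing bond, which is dual to a plaquette of $\gamma$ and has its endpoint lying in $\interior(\gamma)$ inside $\Delta$; hence $\Delta$ meets the set $W(\gamma)$ of such endpoints, and $|W(\gamma)|\le|\gamma|$. Consequently, by translation invariance of $\Psi$ and $\|\Psi\|<\delta$,
\[
\Big|\sum_{\Delta\in\mathscr{F}_\Lambda}\big(\Psi_\Delta(\omega\lor\eta^+)-\Psi_\Delta(\omega^\gamma\lor\eta^+)\big)\Big|\le\sum_{u\in W(\gamma)}\ \sum_{\Delta\ni u}2\sup_\xi|\Psi_\Delta(\xi)|=2|W(\gamma)|\,\|\Psi\|<2\delta|\gamma|,
\]
and hence $\mu_{\Phi,\Lambda}^+(\{\gamma\in\Gamma(\omega)\})\le e^{-2(\beta-\delta)|\gamma|}$.

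It remains to sum over contours enclosing $x$. The number of such contours with $n$ plaquettes is at most $n\,c_d^{\,n}$ for a constant $c_d>1$ depending only on $d$: such a contour is $\|\cdot\|_\infty$-connected, contains one of at most $n$ plaquettes crossing a fixed half-line issued from $x$, and the number of $\|\cdot\|_\infty$-connected sets of $n$ plaquettes through a fixed plaquette is at most $c_d^{\,n}$. Since $|\gamma|\ge2d$ always and $\beta-\delta>L$,
\[
\mu_{\Phi,\Lambda}^+(\{\omega(x)=-1\})\le\sum_{n\ge2d}n\,c_d^{\,n}e^{-2(\beta-\delta)n}<\sum_{n\ge2d}n\,c_d^{\,n}e^{-2Ln}=:\varepsilon(L),
\]
which for $L$ large is a finite quantity, independent of $\Lambda,x,\beta,\Psi$, tending to $0$ as $L\to\infty$ and $<1/2$. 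This gives the first inequality, and the second follows from the spin-flip symmetry of $\Phi$.

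The main difficulty is the energy estimate above. On one hand, the Peierls-contour geometry in general dimension --- the definition of $\interior(\gamma)$, the inclusion $\interior(\gamma)\subset\Lambda$ coming from c-connectedness, the injectivity of $F_\gamma$, and the accounting that flipping $\interior(\gamma)$ turns exactly the $|\gamma|$ dual disagreement bonds into agreement bonds --- is standard but must be set up carefully. On the other hand, the genuinely new point is the control of the perturbation: spin-flip symmetry is precisely what annihilates every $\Delta$ buried inside $\interior(\gamma)$, without which a long-range $\Psi$ would contribute a volume term of order $\delta\,|\interior(\gamma)|$ beyond the reach of Peierls' argument, while $\|\cdot\|_1$-connectedness (available for free after the reduction to Theorem \ref{mainthmconnected}) confines the surviving $\Delta$'s to a neighborhood of $\gamma$; together they produce the surface term $2\delta|\gamma|$, which is exactly why the hypothesis $\beta-\delta>L$ --- rather than a stronger one --- is the right one.
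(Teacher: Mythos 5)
Your proposal is correct and follows essentially the same route as the paper: the Peierls flip map on $\interior\gamma$, with spin-flip symmetry cancelling the $\Delta\subset\interior\gamma$ terms and $\|\cdot\|_1$-connectedness forcing every surviving $\Delta$ to meet an interior endpoint of a bond dual to a plaquette of $\gamma$, yielding the surface correction $2\delta|\gamma|$ and the bound $e^{-2(\beta-\delta)|\gamma|}$, followed by the standard contour count. The only cosmetic differences are that you deduce the $-$ case from the $+$ case by the global spin flip (the paper just repeats the argument) and you use only the single-contour version of the paper's key lemma, which is all its proof of the proposition actually uses.
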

	This proposition means that, with respect to $\mu_{\Phi,\Lambda}^+$ and $\mu_{\Phi,\Lambda}^-$, the spin at $x$ is magnetized for each $x\in\Lambda$. Let us see that Theorem \ref{mainthmconnected} follows from Proposition \ref{lowtempmain}.
	For a sequence $\left\{\Lambda_n\right\}_{n=1}^\infty$ of finite subsets of $\Z^d$, we weite $\Lambda_n\uparrow \Z^d$ when $\Lambda_n\subset\Lambda_{n+1}$ for each $n$ and $\bigcup_{n=1}^\infty\Lambda_n=\Z^d$. For example, $\Lambda_n=B(n)$ satisfies these conditions. For each $\Lambda\Subset\Z^d$, we canonically regard $\mu_{\Phi,\Lambda}^+$ and $\mu_{\Phi,\Lambda}^-$ as Borel probability measures on $\Omega$ with $\mu_{\Phi,\Lambda}^+(\Omega_{\Lambda}^+)=1$ and $\mu_{\Phi,\Lambda}^-(\Omega_{\Lambda}^-)=1$, where $\Omega_{\Lambda}^+=\left\{\omega\in\Omega\left|\omega(x)=1,x\in\Z^d\setminus\Lambda\right.\right\}$ and $\Omega_{\Lambda}^-=\left\{\omega\in\Omega\left|\omega(x)=-1,x\in\Z^d\setminus\Lambda\right.\right\}$, respectively. It is known that if $\mu_{\Phi,\Lambda_n}^+$ (resp. $\mu_{\Phi,\Lambda_n}^-$) converges to some $\mu^+$ (resp. $\mu^-$) in $M(\Omega)$ as $n\to\infty$, then $\mu^+\in K_\Phi$ (resp. $\mu^-\in K_\Phi$) (see \cite[Chapter 1]{Ru}).
	\begin{proof}[{\rm Proof of Theorem {\rm \ref{mainthmconnected}}}]
		Assume that Proposition \ref{lowtempmain}  holds. We take $L,\beta,\delta,\Psi$ and $\Phi$ as in Proposition \ref{lowtempmain}.
		We take a sequence $\left\{\Lambda_n\right\}_{n=1}^\infty$ of finite and c-connected subsets of $\Z^d$ such that $\Lambda_n\uparrow\Z^d$. then we have divergent subsequences $\{n_k\}_{k=1}^\infty$ and $\{m_k\}_{k=1}^\infty$ of $\N$ such that $\mu_{\Phi,\Lambda_{n_k}}^+$ and $\mu_{\Phi,\Lambda_{m_k}}^-$ converges to some $\widehat{\mu}_\Phi^+$ and $ \widehat{\mu}_\Phi^-$ in $M(\Omega)$, respectively. Then, as we mentioned above, $\widehat{\mu}_\Phi^+$ and $\widehat{\mu}_\Phi^-$ are in $K_\Phi$.
		By Proposition \ref{lowtempmain}, $\widehat{\mu}_\Phi^+$ satisfies
		\begin{align}\label{muhat}
			\widehat{\mu}_\Phi^+\left(\left\{\omega\in\Omega\left|\omega(x)=1\right.\right\}\right)=&\lim_{k\to\infty}\mu_{\Phi,\Lambda_{n_k}}^+(\{\omega\in\Omega_{\Lambda_{n_k}}^+\left|\omega(x)=1\right.\})\nonumber\\
			\geq&\ 1-\varepsilon(L)
		\end{align}
		for each $x\in\Z^d$. For each $N\in\N$, we define $\nu_{\Phi,N}^+\in M(\Omega)$ by
		$$
		\nu_{\Phi,N}^+=\frac{1}{|B(N)|}\sum_{x\in B(N)}\tau^x_*\widehat{\mu}_\Phi^+.
		$$
		Since $\Phi$ is translation invariant and $\widehat{\mu}_\Phi^+\in K_\Phi$, $\tau^x_*\widehat{\mu}_\Phi^+\in K_\Phi$ for each $x\in B(N)$. Hence, by the convexity of $K_\Phi$, $\nu_{\Phi,n}^+\in K_\Phi$. Moreover, from Inequality (\ref{muhat}), we have
		\begin{align}\label{nuplus}
			\nu_{\Phi,N}^+\left(\left\{\omega\in\Omega\left|\omega(0)=1\right.\right\}\right)=&\frac{1}{|B(N)|}\sum_{x\in B(N)}\widehat{\mu}_\Phi^+\left(\left\{\omega\in\Omega\left|\omega(x)=1\right.\right\}\right)\nonumber\\
			\geq&\ 1-\varepsilon(L).
		\end{align}
		We can take a divergent subsequence $\{N_l\}_{l=1}^\infty\subset\N$ such that $\nu_{\Phi,N_l}^+$ converges to some $\nu_\Phi^+$ in $M(\Omega)$. It is seen that $\nu_\Phi^+\in I$ and, since $K_\Phi\subset M(\Omega)$ is closed, $\nu_\Phi^+\in K_\Phi$. Hence, by Proposition \ref{GE} we have $\nu_\Phi^+\in I_\Phi$. Moreover, from Inequality (\ref{nuplus}), we have
		\begin{align*}
			\nu_\Phi^+\left(\left\{\omega\in\Omega\left|\omega(0)=1\right.\right\}\right)=&\lim_{l\to\infty}\nu_{\Phi,N_l}^+\left(\left\{\omega\in\Omega\left|\omega(0)=1\right.\right\}\right)\\
			\geq&\ 1-\varepsilon(L).
		\end{align*}
		By doing the same argument to $\widehat{\mu}_\Phi^-$, we have $\nu_\Phi^-\in I_\Phi$ such that $\nu_\Phi^-\left(\left\{\omega\in\Omega\left|\omega(0)=-1\right.\right\}\right)\geq 1-\varepsilon(L)$. Since $\varepsilon(L)<1/2$, we have $\nu_\Phi^+\neq\nu_\Phi^-$ and complete the proof.
	\end{proof}
	
	\subsection{Contours for a configration and Peierls' argument}
	We will give a proof of Proposition \ref{lowtempmain}. We notice that the following argument is based on {\bf Peierls' argument}, which was used to show the coexistence of equilibrium states for nonperturbed Ising models at low temperature (see \cite[Section 3.7.2, 5.7.4]{FV1}).
	
	We introduce the notion of a {\bf contour} on $\Z^d$. We follow the notion in \cite[Chapter 5]{FV1}. For each $x\in\Z^d$, we write $\I_x$ for the closed unit cube in $\R^d$ centered at $x$:
	$$\I_x=x+\left[-\frac{1}{2},\frac{1}{2}\right]^d,$$
	and, for each $\Lambda\subset\Z^d$, define
	$$\M(\Lambda)=\bigcup_{x\in\Lambda}\I_x\subset\R^d.$$
	\begin{dfn}\label{contour}
		$\gamma\subset\R^d$ is a {\bf contour} if
		$$\gamma=\partial\M(\Lambda)$$
		for some c-connected $\Lambda\Subset\Z^d$ (where $\partial$ denotes the boundary with respect to the usual topology of $\R^d$). It is seen that, for each a contour $\gamma$, $\Lambda\Subset\Z^d$ such that $\gamma=\partial\M(\Lambda)$ is uniquely determined. We call such $\Lambda$ the {\bf interior} of $\gamma$ and write $\interior\gamma$.
	\end{dfn}
	It is seen that a contour $\gamma$ is a connected sum of {\bf plaquettes}, which are $(d-1)$-dimensional faces of $\I_x,x\in\Z^d$, such that $\gamma$ devides $\Z^d$ into two $\|\cdot\|_\infty$-connected sets: the interior and exterior of $\gamma$ (for example, see \cite[Appendix B]{FV1}). We write $|\gamma|$ for the number of plaquettes which are contained in $\gamma$.
	We write $\E=\left\{\{x,y\}\subset\Z^d\left|\|x-y\|_1=1\right.\right\}$ and call an element of $\E$ an {\bf edge} (considering the graph $(\Z^d,\E)$).
	There is the one-to-one mapping between plaquettes and $\E$,
	associating a plaquette with the unique edge crossing it. By this mapping, the plaquettes contained in $\gamma$ correspond to elements of $\left\{\{x,y\}\in\E\left|x\in\interior\gamma,y\in\Z^d\setminus\interior\gamma\right.\right\}$.
	
	For each $\Lambda\Subset\Z^d$, let $\Omega_\Lambda^+=\left\{\omega\in\Omega\left| \omega(x)=1 \text{ on }\Z^d\setminus\Lambda\right.\right\}$.
	We write $\Omega^+=\bigcup_{\Lambda\Subset\Z^d}\Omega_\Lambda^+$.
	For each $\omega\in\Omega^+$, we define
	$\Lambda^-(\omega)=\left\{x\in\Z^d\left|\omega(x)=-1\right.\right\}\Subset\Z^d$ and
	$$\M(\omega)=\M(\Lambda^-(\omega)).$$
	We decompose $\partial\M(\omega)$ into connected components:
	$$\partial\M(\omega)=\bigsqcup_{i=1}^n\gamma_i,$$
	then it is shown that each $\gamma_i\ (i=1,\dots,n)$ is a contour and, by the mapping mentioned above, a plaquette contained in some $\gamma_i$ is associated with an edge $\{x,y\}\in\E$ such that $\omega(x)=1$ and $\omega(y)=-1$ (see \cite[Section 5.7.4]{FV1}). We write
	$$\Gamma(\omega)=\{\gamma_1,\dots,\gamma_n\}.$$
	Let us consider the Ising model with inverse temparature $\beta>0$.
	For $\Lambda\Subset\Z^d$, we define $\E_\Lambda=\left\{\{x,y\}\in\E\right|$
	$\left.\{x,y\}\cap\Lambda\neq\emptyset\right\}$. Then for $\omega\in\Omega_\Lambda^+$, we have
	\begin{align*}
		\exp\left(-\sum_{\Delta\in\mathscr{F}_\Lambda}\Phi_{\Delta}^\beta(\omega)\right)&=\exp\left(\beta\sum_{\{x,y\}\in\E_\Lambda}\omega(x)\omega(y)\right)\\
		&=e^{\beta\left|\E_\Lambda\right|}e^{-2\beta\left|\left\{\{x,y\}\in\E_\Lambda\left|\omega(x)=1,\omega(y)=-1\right.\right\}\right|}\\
		&=e^{\beta\left|\E_\Lambda\right|}\prod_{\gamma\in\Gamma(\omega)}e^{-2\beta|\gamma|}.
	\end{align*}
	Suppose $\Lambda$ is c-connected. For $\omega\in\Omega_\Lambda^+$ and $\gamma\in\Gamma(\omega)$, we define $\omega_\gamma\in\Omega$ by
	$$
	\omega_\gamma(x)=\begin{cases}
		-\omega(x),&x\in\interior\gamma\\ \omega(x),&x\in\Z^d\setminus\interior\gamma.\end{cases}
	$$
	Then, by the assumption that $\Lambda$ is c-connected, $\omega_\gamma\in\Omega_\Lambda^+$ and
	\begin{equation}\label{flipgamma}
		\Gamma(\omega_\gamma)=\Gamma(\omega)\setminus\{\gamma\}.
	\end{equation}
	Hence we have
	$$
	\exp\left(-\sum_{\Delta\in\mathscr{F}_\Lambda}\Phi_{\Delta}^\beta(\omega)\right)=e^{-2\beta|\gamma|}\exp\left(-\sum_{\Delta\in\mathscr{F}_\Lambda}\Phi_{\Delta}^\beta(\omega_\gamma)\right).
	$$
	This is a key equation in Peierls' argument.
	
	Let us consider a perturbation of $\Phi^\beta$ by a spin-flip symmetric interaction $\Psi$ such that $\|\Psi\|<\delta$ and it takes zero on non-$\|\cdot\|_1$-connected sets. Then, since $\omega_\gamma=-\omega$ on $\interior\gamma$ and $\omega_\gamma=\omega$ on $\Z^d\setminus\interior\gamma$, by the spin-flip symmetry of $\Psi$, we have
	\begin{align*}
		\sum_{\Delta\in\mathscr{F}_\Lambda}\Psi_\Delta(\omega)&=\sum_{\substack{\Delta\in\mathscr{F}_\Lambda,\\\Delta\subset\interior\gamma\text{ or }\Delta\cap\interior\gamma=\emptyset}}\Psi_\Delta(\omega)+\sum_{\substack{\Delta\in\mathscr{F}_\Lambda,\\\Delta\not\subset\interior\gamma,\Delta\cap\interior\gamma\neq\emptyset}}\Psi_\Delta(\omega)\\
		&=\sum_{\substack{\Delta\in\mathscr{F}_\Lambda,\\\Delta\subset\interior\gamma\text{ or }\Delta\cap\interior\gamma=\emptyset}}\Psi_\Delta(\omega_\gamma)+\sum_{\substack{\Delta\in\mathscr{F}_\Lambda,\\\Delta\not\subset\interior\gamma,\Delta\cap\interior\gamma\neq\emptyset}}\Psi_\Delta(\omega).
	\end{align*}
	If $\Delta\in\mathscr{F}_\Lambda$ is $\|\cdot\|_1$-connected, $\Delta\not\subset\interior\gamma$ and $\Delta\cap\interior\gamma\neq\emptyset$, then $\Delta$ contains some edge $\{x,y\}\in\E$ associated with some plaquette contained in $\gamma$. Hence we have
	$$\abs[\bigg]{\sum_{\substack{\Delta\in\mathscr{F}_\Lambda,\\\Delta\not\subset\interior\gamma,\Delta\cap\interior\gamma\neq\emptyset}}\Psi_\Delta(\omega)}\leq|\gamma|\|\Psi\|<|\gamma|\delta
	$$
	and the same holds for $\omega_\gamma$. From these, we have
	\begin{align}\label{weight}
		&\exp\left(-\sum_{\Delta\in\mathscr{F}_\Lambda}\left(\Phi^\beta+\Psi\right)_\Delta(\omega)\right)\nonumber\\&=
		e^{-2\beta|\gamma|}\exp\left(-\sum_{\Delta\in\mathscr{F}_\Lambda}\Phi_{\Delta}^\beta(\omega_\gamma)\right)
		\exp\left(-\sum_{\Delta\in\mathscr{F}_\Lambda}\Psi_\Delta(\omega)\right)\nonumber\\
		&=e^{-2\beta|\gamma|}\exp\left(-\sum_{\Delta\in\mathscr{F}_\Lambda}\left(\Phi^\beta+\Psi\right)_\Delta(\omega_\gamma)\right)
		\exp\left(-\sum_{\substack{\Delta\in\mathscr{F}_\Lambda,\\\Delta\not\subset\interior\gamma,\Delta\cap\interior\gamma\neq\emptyset}}\left(\Psi_\Delta(\omega)-\Psi_\Delta(\omega_\gamma)\right)\right)\nonumber\\
		&=e^{-2\beta|\gamma|+\delta\left(\Lambda,\omega,\gamma\right)}\exp\left(-\sum_{\Delta\in\mathscr{F}_\Lambda}\left(\Phi^\beta+\Psi\right)_\Delta(\omega_\gamma)\right)
	\end{align}
	with
	\begin{equation}\label{delta}
		\left|\delta\left(\Lambda,\omega,\gamma\right)\right|<2|\gamma|\delta.
	\end{equation}
	\subsection{Proof of Proposition \ref{lowtempmain}}
	First, we prove the following lemma about the number of contours surrounding a fixed point. The result for $d=2$ is proved and used in \cite[Chapter 6]{Geo11}. We give a proof for the completeness.
	\begin{lem}\label{numberofcontour}
		For any $n\in\N$, let
		$$\Gamma_{n,0}=\left\{\gamma:\text{contour}\left||\gamma|=n,0\in\interior\gamma\right.\right\}.$$
		Then we have
		$$\left|\Gamma_{n,0}\right|\leq(n+1)C_d^{2n+1},$$
		where $C_d\in\N$ is the number of  ($d-1$)-dimensional faces of $\I_x,x\in\Z^d$ which are not $\{1/2\}\times[-1/2,1/2]^{d-1}$ and intersect with it.
	\end{lem}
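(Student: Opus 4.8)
The plan is to reduce $|\Gamma_{n,0}|$ to a count of contours passing through a single fixed plaquette, and then to estimate that by a depth-first exploration of the contour, viewed as a connected subgraph of the \emph{plaquette graph} $\mathcal{G}$ whose vertices are all plaquettes and whose edges join plaquettes with nonempty intersection. Every vertex of $\mathcal{G}$ has degree $C_d$ by translation invariance (this is exactly the quantity $C_d$ in the statement), and, by the description recalled above, a contour $\gamma$ with $|\gamma|=n$ is in particular carried by a connected set of $n$ vertices of $\mathcal{G}$; a standard spanning-tree argument then controls how many such sets contain a prescribed vertex.

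For the reduction, fix the reference plaquette $P_0=\{1/2\}\times[-1/2,1/2]^{d-1}$ (a face of $\I_0$) and, writing $e_1$ for the first coordinate vector, put $P_m=me_1+P_0$ for $m\in\Z$. Given $\gamma\in\Gamma_{n,0}$, the half-line $\{te_1:t\ge0\}$ starts inside the bounded set $\M(\interior\gamma)$ (since $0\in\interior\gamma$) and runs off to infinity, so it crosses $\partial\M(\interior\gamma)=\gamma$; because this half-line lies in $\{x_2=\cdots=x_d=0\}$, the only plaquettes it can meet are those of the form $P_k$, so its first crossing is some $P_m$ with $m\ge0$. To bound $m$: the orthogonal projection of $\gamma$ onto the $e_1$-axis is the union of the projections of its $n$ plaquettes, each of length at most $1$, and it is connected (being the image of the connected set $\gamma$), hence an interval of length at most $n$; it contains $m+\tfrac12$ (from $P_m\in\gamma$) and also a value $\le-\tfrac12$ (applying the same crossing argument to the half-line $\{te_1:t\le0\}$), so $m+1\le n$, i.e.\ $m\in\{0,\dots,n-1\}$, at most $n+1$ choices. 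By translation invariance the number of $\gamma$ with $|\gamma|=n$ and $P_m\in\gamma$ is independent of $m$; call it $N_n$. Hence $|\Gamma_{n,0}|\le(n+1)N_n$.

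To bound $N_n$, given such a $\gamma$ choose (say, lexicographically) a spanning tree of the connected $n$-vertex subgraph of $\mathcal{G}$ carried by the plaquettes of $\gamma$, rooted at $P_0$, and let its depth-first Euler tour be the walk $P_0=q_0,q_1,\dots,q_{2(n-1)}=P_0$, traversing each tree edge once in each direction. Each $q_{i+1}$ is a $\mathcal{G}$-neighbour of $q_i$, so there are at most $C_d$ choices at each of the $2(n-1)$ steps, hence at most $C_d^{2(n-1)}$ possible tours; and the set of plaquettes occurring in the tour is exactly $\gamma$, so $\gamma$ is recovered from its tour. Thus $N_n\le C_d^{2(n-1)}$, and combining with the reduction, $|\Gamma_{n,0}|\le(n+1)C_d^{2(n-1)}\le(n+1)C_d^{2n+1}$.

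I expect the routine parts to be the spanning-tree count and the translation-invariance step; the main obstacle will be the geometric bookkeeping in general dimension $d\ge2$: verifying that $\gamma=\partial\M(\interior\gamma)$ really is connected in the plaquette graph with degree bound $C_d$, that the only plaquettes a coordinate half-line can meet are the $P_k$, and that a depth-first walk through the plaquettes of $\gamma$ does reconstruct $\gamma$. These are the facts implicit in \cite[Appendix B]{FV1} and established for $d=2$ in \cite[Chapter 6]{Geo11}, and here they must be pinned down carefully for all $d\ge2$.
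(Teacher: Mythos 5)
Your proposal is correct and follows essentially the same route as the paper: both reduce to the fact that a contour enclosing $0$ with $n$ plaquettes must contain one of at most $n+1$ specified plaquettes $\{1/2+i\}\times[-1/2,1/2]^{d-1}$ on the positive first coordinate axis, and then bound the number of connected $n$-plaquette sets through a fixed plaquette by counting closed walks of length at most $2n$ in the degree-$C_d$ plaquette-adjacency graph (the paper via a closed path visiting all of $S$ of length $\le 2|S|$, you via the Euler tour of a spanning tree). Your version merely makes the geometric reduction (the half-line crossing argument) and the tree count slightly more explicit, yielding the marginally sharper exponent $2(n-1)$, which is absorbed into the stated bound.
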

	\begin{proof}[{\rm Proof}]
		Let $\gamma\in\Gamma_{n,0}$. Then it is easily seen that, if we write $\mathscr{H}_i=\{1/2+i\}\times[-1/2,1/2]^{d-1}$: the ($d-1$)-dimensional faces for $i\in\Z$, then $\mathscr{H}_i\subset\gamma$ for some $i=0,\dots,n$. We define the graph $G=(V,E)$, where the set of vertices $V$ is the set of all ($d-1$)-dimensional faces of $\I_x,x\in\Z^d$ and the set of edges is the set of all pairs $\{\mathscr{H},\mathscr{H}'\}$ of elements of $V$ such that $\mathscr{H}\cap\mathscr{H}'\neq\emptyset$. For any $\mathscr{H}\in V$, the number of edges which has $\mathscr{H}$ as an end point is $C_d$. By elementary arguments of graph theory, it is seen that, for any finite connected subset $S\subset V$ and any $\mathscr{H}\in S$, there exists a path $p$ in $G$ such that $p$ starts from and ends at $\mathscr{H}$, passes through every vertex in $S$ and does not pass through any other vertices and the length of $p$ is bounded by $2|S|$. We fix $i=0,\dots,n$. Then, by regarding a contour as a finite connected subset in $V$, we can associate each $\gamma\in\Gamma_{n,0}$ such that $\mathscr{H}_i\subset\gamma$ with a path $p$ in $G$ starting form and ends at $\mathscr{H}_i$ which satisfies the conditions for $\gamma$ as above. Since the number of paths starting from $\mathscr{H}_i$ the length of which is bounded by $2n$ is bounded by $C_d^{2n+1}$, we obtain $\left|\Gamma_{n,0}\right|\leq(n+1)C_d^{2n+1}$.
	\end{proof}
	
	The next lemma is a key to prove Proposition \ref{lowtempmain}.
	
	\begin{lem}\label{keylemma}
		Let $\beta>0,0<\delta<1$ and $\Psi$ be a tranlation invariant and spin-flip symmetric interaction which is zero on non-$\|\cdot\|_1$-connected sets and $\|\Psi\|<\delta$. We write $\Phi=\Phi^\beta+\Psi$.
		Let $\Lambda\Subset\Z^d$ be c-connected and $\gamma_1,\dots,\gamma_n$ be contours such that $\interior\gamma_1,\dots,\interior\gamma_n\subset\Lambda$ and they are pairwise disjoint. Then we have
		$$\mu_{\Phi,\Lambda}^+\left(\left\{\omega\in\Omega_{\Lambda}^+\left|\gamma_1,\dots,\gamma_n\in\Gamma(\omega)\right.\right\}\right)\leq\exp\left(-2(\beta-\delta)(|\gamma_1|+\dots+|\gamma_n|)\right)$$
		(where we canonically regard $\mu_{\Phi,\Lambda}^+$ as a measure on $\Omega_\Lambda^+$).
	\end{lem}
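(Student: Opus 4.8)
The plan is to use the Peierls-type estimate encoded in equation (\ref{weight}) together with the combinatorial bound of Lemma \ref{numberofcontour}, iterating the single-contour flip. The key observation is that if $\gamma_1,\dots,\gamma_n\in\Gamma(\omega)$ are present and pairwise disjoint with interiors in $\Lambda$, we may flip them one at a time: set $\omega^{(0)}=\omega$ and $\omega^{(k)}=(\omega^{(k-1)})_{\gamma_k}$, so that by (\ref{flipgamma}) each $\gamma_k$ is removed from the contour set while the others remain, and $\omega^{(n)}$ has $\gamma_1,\dots,\gamma_n\notin\Gamma(\omega^{(n)})$. Applying (\ref{weight})–(\ref{delta}) at each step gives
\begin{equation*}
	\exp\left(-\sum_{\Delta\in\mathscr{F}_\Lambda}\Phi_\Delta(\omega)\right)=\exp\left(\sum_{k=1}^n\left(-2\beta|\gamma_k|+\delta(\Lambda,\omega^{(k-1)},\gamma_k)\right)\right)\exp\left(-\sum_{\Delta\in\mathscr{F}_\Lambda}\Phi_\Delta(\omega^{(n)})\right),
\end{equation*}
and using $|\delta(\Lambda,\omega^{(k-1)},\gamma_k)|<2|\gamma_k|\delta$ this is bounded by $\exp\left(-2(\beta-\delta)\sum_k|\gamma_k|\right)\exp\left(-\sum_{\Delta}\Phi_\Delta(\omega^{(n)})\right)$.

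Next I would sum over configurations. Fix the contours $\gamma_1,\dots,\gamma_n$ and let $A=\{\omega\in\Omega_\Lambda^+\mid\gamma_1,\dots,\gamma_n\in\Gamma(\omega)\}$. The map $\omega\mapsto\omega^{(n)}=:\Theta(\omega)$ sends $A$ injectively into $\Omega_\Lambda^+$ (it is injective because we can recover $\omega$ from $\Theta(\omega)$ by flipping the interiors of $\gamma_1,\dots,\gamma_n$ back, the $\gamma_i$ being fixed and their interiors disjoint). Hence, writing $w(\omega)=\exp(-\sum_{\Delta\in\mathscr{F}_\Lambda}\Phi_\Delta(\omega\lor\eta^+_{\Z^d\setminus\Lambda}))$ for the unnormalized weight,
\begin{equation*}
	\mu_{\Phi,\Lambda}^+(A)=\frac{\sum_{\omega\in A}w(\omega)}{Z_{\Phi,\Lambda}^+}\leq\frac{e^{-2(\beta-\delta)(|\gamma_1|+\dots+|\gamma_n|)}\sum_{\omega\in A}w(\Theta(\omega))}{Z_{\Phi,\Lambda}^+}\leq e^{-2(\beta-\delta)(|\gamma_1|+\dots+|\gamma_n|)},
\end{equation*}
where the last step uses that $\Theta$ is injective on $A$ so $\sum_{\omega\in A}w(\Theta(\omega))\leq\sum_{\xi\in\Omega_\Lambda^+}w(\xi)=Z_{\Phi,\Lambda}^+$. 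This is exactly the claimed bound.

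The main point requiring care — and the step I expect to be the real obstacle — is the bookkeeping of the perturbation term across the iterated flips: one must check that at each stage $k$ the configuration $\omega^{(k-1)}$ still lies in $\Omega_\Lambda^+$, that $\gamma_k\in\Gamma(\omega^{(k-1)})$ (so that (\ref{weight}) applies to it), and that the error $\delta(\Lambda,\omega^{(k-1)},\gamma_k)$ is still controlled by $2|\gamma_k|\delta$ with the \emph{same} $\|\Psi\|<\delta$ bound. The first two follow from the c-connectedness of $\Lambda$ and the pairwise disjointness of the interiors (flipping $\interior\gamma_k$ does not affect whether $\gamma_j$, $j\ne k$, is a contour of the configuration, since $\gamma_j$ is determined by spins on $\interior\gamma_j$ and its exterior neighbours, which are disjoint from $\interior\gamma_k$ or entirely contained in the exterior of $\gamma_k$). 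The third is just the estimate (\ref{delta}) reapplied, valid because it only used $\|\Psi\|<\delta$ and the fact that a $\|\cdot\|_1$-connected $\Delta$ straddling $\interior\gamma_k$ contains an edge crossing $\gamma_k$ — a property of $\gamma_k$ alone, not of the ambient configuration. Once these three checks are in place the displayed chain of inequalities finishes the proof; I would also remark that the case $n=0$ is trivial and that the statement is vacuous unless the $\gamma_i$ can coexist, which the disjointness hypothesis guarantees.
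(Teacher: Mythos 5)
Your proposal is correct and follows essentially the same route as the paper's proof: iterate the single-contour flip of (\ref{weight})--(\ref{delta}) using (\ref{flipgamma}) to keep the remaining contours intact, then use injectivity of the composed flip map to bound the resulting sum of weights by $Z_{\Phi,\Lambda}^+$. The checks you flag as the ``real obstacle'' are exactly the ones the paper relies on (and resolves the same way), so no changes are needed.
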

	\begin{proof}[{\rm Proof}]
		For each $i=1,\dots,n$, we define a map $\theta_{\gamma_i}:\left\{\omega\in\Omega_\Lambda^+\left|\gamma_i\in\Gamma(\omega)\right.\right\}\ni\omega\mapsto\omega_{\gamma_i}\in\left\{\omega\in\Omega_\Lambda^+\left|\gamma_i\notin\Gamma(\omega)\right.\right\}$. It is clear that $\theta_{\gamma_i}$ is injective. From Equation (\ref{flipgamma}), we can define the composition $\theta=\theta_{\gamma_n}\circ\cdots\circ\theta_{\gamma_1}:\left\{\omega\in\Omega_{\Lambda}^+\left|\gamma_1,\dots,\gamma_n\in\Gamma(\omega)\right.\right\}\to\Omega_\Lambda^+$ and it is injective. Moreover, for each $\omega\in\Omega_\Lambda^+$ such that $\gamma_1,\dots,\gamma_n\in\Gamma(\omega)$, from Equation (\ref{weight}) and Inequality (\ref{delta}), we have
		\begin{align*}
			\exp\left(-\sum_{\Delta\in\F_\Lambda}\Phi_\Delta(\omega)\right)\leq&\  e^{-2(\beta-\delta)|\gamma_1|}\exp\left(-\sum_{\Delta\in\F_\Lambda}\Phi_\Delta(\theta_{\gamma_1}(\omega))\right)\\
			\leq&\ e^{-2(\beta-\delta)|\gamma_1|}e^{-2(\beta-\delta)|\gamma_2|}\exp\left(-\sum_{\Delta\in\F_\Lambda}\Phi_\Delta\left(\theta_{\gamma_2}\left(\theta_{\gamma_1}(\omega)\right)\right)\right)
		\end{align*}
		\begin{align*}
			\qquad\qquad\qquad\qquad\qquad\qquad&\vdots\\
			\leq&\ \exp(-2(\beta-\delta)(|\gamma_1|+\cdots+|\gamma_n|))\exp\left(-\sum_{\Delta\in\F_\Lambda}\Phi_\Delta\left(\theta(\omega)\right)\right).
		\end{align*}
		Hence we have
		\begin{align*}
			&\mu_{\Phi,\Lambda}^+\left(\left\{\omega\in\Omega_{\Lambda}^+\left|\gamma_1,\dots,\gamma_n\in\Gamma(\omega)\right.\right\}\right)\qquad\qquad\qquad\qquad\qquad\qquad\qquad\qquad\qquad\\
			=&\ \frac{1}{Z_{\Phi,\Lambda}^+}\sum_{\substack{\omega\in\Omega_{\Lambda}^+,\\\gamma_1,\dots,\gamma_n\in\Gamma(\omega)}}\exp\left(-\sum_{\Delta\in\mathscr{F}_\Lambda}\Phi_\Delta(\omega)\right)\\
			\leq&\ \exp(-2(\beta-\delta)(|\gamma_1|+\cdots+|\gamma_n|))\cdot\frac{1}{Z_{\Phi,\Lambda}^+}\sum_{\substack{\omega\in\Omega_{\Lambda}^+,\\\gamma_1,\dots,\gamma_n\in\Gamma(\omega)}}\exp\left(-\sum_{\Delta\in\F_\Lambda}\Phi_\Delta\left(\theta(\omega)\right)\right)\\
			\leq&\ \exp(-2(\beta-\delta)(|\gamma_1|+\cdots+|\gamma_n|))
		\end{align*}
		and obtain the statement.
	\end{proof}
	
	We prove Proposition \ref{lowtempmain} using these lemmas.
	
	\begin{proof}[{\rm Proof of Proposition {\rm \ref{lowtempmain}}}]
		Let $\beta,\delta>0$ and $\Psi$ be a tranlation invariant and spin-flip symmetric interaction which is zero on non-$\|\cdot\|_1$-connected sets and $\|\Psi\|<\delta$. We write $\Phi=\Phi^\beta+\Psi$.
		We take arbitrary $\Lambda\Subset\Z^d$ and $x\in\Lambda$ and write $\Omega_{\Lambda,x}^{+,-}=\left\{\omega\in\Omega_\Lambda^+\left|\omega(x)=-1\right.\right\}$. Let $\Gamma_x=\left\{\gamma:\text{ contour}\left|x\in\interior\gamma\right.\right\}$.
		For each $\omega\in\Omega_{\Lambda,x}^{+,-}$, it is seen that $\Gamma(\omega)\cap\Gamma_x\neq\emptyset$. Hence, by Lemmas \ref{keylemma} and \ref{numberofcontour}, we have
		\begin{align*}
			\mu_{\Phi,\Lambda}^+\left(\Omega_{\Lambda,x}^{+,-}\right)\leq&\ \sum_{\gamma\in\Gamma_x}\mu_{\Phi,\Lambda}^+(\{\omega\in\Omega_\Lambda^+\left|\gamma\in\Gamma(\omega)\right.\})\\
			\leq&\ \sum_{\gamma\in\Gamma_x}e^{-2(\beta-\delta)|\gamma|}\\
			=&\ \sum_{k=1}^\infty\sum_{\gamma\in\Gamma_{k,0}}e^{-2(\beta-\delta)k}\\
			\leq&\ \sum_{k=1}^\infty(k+1)C_d^{2k+1}e^{-2(\beta-\delta)k}.
		\end{align*}
		uniformly in $\Lambda$ and $x$. Hence, if we take $\log C_d<L<\infty$ sufficiently large and $\beta-\delta>L$ we have
		\begin{align*}
			\mu_{\Phi,\Lambda}^+\left(\Omega_{\Lambda,x}^{+,-}\right)\leq&\ \sum_{k=1}^\infty(k+1)C_d^{2k+1}e^{-2(\beta-\delta)k}\\
			\leq&\ C_d\sum_{k=1}^\infty(k+1)e^{-2(L-\log C_d)k}\\
			=&\ \varepsilon(L)<\ \frac{1}{2}
		\end{align*}
		and $\varepsilon(L)\to 0$ as $L\to\infty$. We obtain the statement in the case of $+$.
		By the same argument in the case of $-$, we obtain Proposition \ref{lowtempmain}.
	\end{proof}
	\begin{rem}\label{magnetization}
		In the proof of Theorem \ref{mainthm} (2) above, we showed that there exist equilibrium states $\nu_\Phi^+$ and $\nu_\Phi^-$ for $\Phi=\Phi^\beta+\Psi$ such that the magnetization is positive and negative, respectively, that is,
		$$
	\int_\Omega\omega(0)\ d\nu_\Phi^+(\omega)>0\quad\text{and}\quad
		\int_\Omega\omega(0)\ d\nu_\Phi^-(\omega)<0.
		$$
		It can be seen that some condition on order of decay of $\Psi$, like $d$-th order decaying condition, is necessary for $\Phi$ to have non-magnetization-vanishing equilibrium states. In \cite{Ent81}, a translation invariant and spin-flip symmetric two body interaction $\Phi$ is constructed such that every equilibrium state $\mu$ for $\Phi=\Phi^\beta+\Psi$ vanishes magnetization. This interaction $\Psi$ is antiferromagnetic type (that is, $\Psi_{\{x,y\}}(\omega)=K_{x,y}\omega(x)\omega(y),K_{x,y}\geq0$) and small with respect to $\|\cdot\|$, but not $d$-th order decreasing. Moreover, it is shown in \cite{BCK07} that, if $d<s\leq d+1$, then, for  the antiferromagnetic interaction $\Psi$ of the form $K_{x,y}=\delta/|x-y|^s,\delta>0$ (where $|\cdot|$ is the Euclidean distance), every equilibrium state for $\Phi=\Phi^\beta+\Psi$ vanishes magnetization. (It seems unknown whether the uniqueness of the equilibrium state holds for these perturbations.)
	\end{rem}
	
	\section{Further questions}\label{furtherquestions}
	Here we notice some further questions next to our results. We showed that there are more than one equilibrium state for $\Phi=\Phi^\beta+\Psi$, where $\beta>0$ is sufficiently large and $\Psi$ is a translation invariant and spin-flip symmetric interaction such that $\opnorm{\Psi}$ is sufficiently small. When $\Psi=0$, that is, $\Phi=\Phi^\beta$, the nonperturbed Ising model at sufficiently low temperature, the structure of $I_{\Phi^\beta}$ is completely known. Remember $\mu_{\Phi^\beta,\Lambda}^+$ and $\mu_{\Phi^\beta,\Lambda}^-$ defined for each $\Lambda\Subset\Z^d$ in (\ref{plusminusgibbs}). 
	\begin{prop}[Completeness of phase diagram]\label{compphasediagram}
		Let $\beta>\beta_c$.
		We take a sequence $\{\Lambda_n\}_{n=1}^\infty$ of finite and c-connected subsets of $\Z^d$ such that $\Lambda_n\uparrow\Z^d$. Then $\{\mu_{\Phi^\beta,\Lambda_n}^+\}_{n=1}^\infty$ and $\{\mu_{\Phi^\beta,\Lambda_n}^-\}_{n=1}^\infty$ converge to some $\mu_{\Phi^\beta}^+$ and $\mu_{\Phi^\beta}^-$ in $M(\Omega)$, respectively and they are independent of the choice of $\{\Lambda_n\}_{n=1}^\infty$. $\mu_{\Phi^\beta}^+$ and $\mu_{\Phi^\beta}^-$ are translation invariant Gibbs states for $\Phi^\beta$ and, then $\mu_{\Phi^\beta}^+,\mu_{\Phi^\beta}^-\in I_{\Phi^\beta}$. Furthermore, $\mu_{\Phi^\beta}^+$ and $\mu_{\Phi^\beta}^-$ span $I_{\Phi^\beta}$, that is, 
		$$
		I_{\Phi^\beta}=\left\{p\mu_{\Phi^\beta}^++(1-p)\mu_{\Phi^\beta}^-\left|0\leq p\leq1\right.\right\}.
		$$
	\end{prop}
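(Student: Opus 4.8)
The plan is to deduce the convergence and all the structural facts about $\mu_{\Phi^\beta}^{+}$ and $\mu_{\Phi^\beta}^{-}$ from standard correlation inequalities, and to obtain the final identity $I_{\Phi^\beta}=\{p\mu_{\Phi^\beta}^{+}+(1-p)\mu_{\Phi^\beta}^{-}\mid 0\le p\le 1\}$ from the deep classification theorems of Aizenman--Higuchi and Bodineau; the latter will be the main obstacle.

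First I would treat the convergence. Since $\Phi^\beta$ is ferromagnetic, the finite Gibbs states $\mu_{\Phi^\beta,\Lambda}^{+}$ and $\mu_{\Phi^\beta,\Lambda}^{-}$ satisfy the FKG lattice condition; writing $\nu\preceq\nu'$ for stochastic domination of Borel probability measures on $\Omega$, enlarging the box with a $+$ (resp.\ $-$) boundary condition only lowers (resp.\ raises) the probability of every increasing event, so $\mu_{\Phi^\beta,\Lambda}^{+}\succeq\mu_{\Phi^\beta,\Lambda'}^{+}$ and $\mu_{\Phi^\beta,\Lambda}^{-}\preceq\mu_{\Phi^\beta,\Lambda'}^{-}$ whenever $\Lambda\subset\Lambda'$ are c-connected. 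Hence along any c-connected exhausting sequence $\Lambda_n\uparrow\Z^d$ the sequence $\mu_{\Phi^\beta,\Lambda_n}^{+}$ is $\preceq$-decreasing and $\mu_{\Phi^\beta,\Lambda_n}^{-}$ is $\preceq$-increasing; a monotone sequence of probability measures on the compact metrizable space $\Omega$ converges in the weak* topology, so the limits $\mu_{\Phi^\beta}^{+}$ and $\mu_{\Phi^\beta}^{-}$ exist, and since every Gibbs state is squeezed between the $-$ and $+$ finite-volume states, $\mu_{\Phi^\beta}^{+}$ and $\mu_{\Phi^\beta}^{-}$ are, respectively, the $\preceq$-largest and $\preceq$-smallest elements of $K_{\Phi^\beta}$ and in particular do not depend on the chosen sequence. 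That $\mu_{\Phi^\beta}^{\pm}\in K_{\Phi^\beta}$ is the standard fact that a weak* limit of finite Gibbs states with consistent boundary conditions is a Gibbs state (\cite[Chapter~1]{Ru}).

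Next I would check the remaining properties of the limits. For $a\in\Z^d$ the measure $\tau^a_*\mu_{\Phi^\beta}^{+}$ again lies in $K_{\Phi^\beta}$ because $\Phi^\beta$ is translation invariant, so $\mu_{\Phi^\beta}^{+}\succeq\tau^a_*\mu_{\Phi^\beta}^{+}$; applying the order-preserving push-forward $\tau^a_*$ to the relation $\mu_{\Phi^\beta}^{+}\succeq\tau^{-a}_*\mu_{\Phi^\beta}^{+}$ gives the reverse inequality, whence $\tau^a_*\mu_{\Phi^\beta}^{+}=\mu_{\Phi^\beta}^{+}$, and likewise for $\mu_{\Phi^\beta}^{-}$; thus $\mu_{\Phi^\beta}^{\pm}\in K_{\Phi^\beta}\cap I=I_{\Phi^\beta}$ by Proposition~\ref{GE}. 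The same squeezing shows $\mu_{\Phi^\beta}^{+}$ and $\mu_{\Phi^\beta}^{-}$ are extreme points of $I_{\Phi^\beta}$: if $\mu_{\Phi^\beta}^{+}=p\nu_1+(1-p)\nu_2$ with $\nu_i\in I_{\Phi^\beta}$ and $0<p<1$, then $\nu_i\preceq\mu_{\Phi^\beta}^{+}$, so both sides have the same integral against every bounded increasing function, forcing $\nu_1=\nu_2=\mu_{\Phi^\beta}^{+}$. Finally the inclusion $\{p\mu_{\Phi^\beta}^{+}+(1-p)\mu_{\Phi^\beta}^{-}\mid 0\le p\le1\}\subset I_{\Phi^\beta}$ is immediate from the convexity of $I_{\Phi^\beta}$ (Section~\ref{latticesystems}).

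It remains to prove the reverse inclusion $I_{\Phi^\beta}\subset\{p\mu_{\Phi^\beta}^{+}+(1-p)\mu_{\Phi^\beta}^{-}\mid 0\le p\le1\}$, which is the crux of the statement and the step I expect to be the real obstacle. The FKG inequality already gives $\mu_{\Phi^\beta}^{-}\preceq\mu\preceq\mu_{\Phi^\beta}^{+}$ for every $\mu\in I_{\Phi^\beta}$, but the full decomposition needs heavy input: for $d=2$ it is the Aizenman--Higuchi theorem \cite{Aiz80,Hig81}, which shows that \emph{every} Gibbs state for $\Phi^\beta$ with $\beta>\beta_c$ --- not merely the translation invariant ones --- is a mixture of $\mu_{\Phi^\beta}^{+}$ and $\mu_{\Phi^\beta}^{-}$; for $d\ge3$ it is Bodineau's theorem \cite{Bod06}, which establishes the same for translation invariant Gibbs states by a coarse-graining and surface-tension analysis. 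I would invoke these results rather than reprove them, since reproving either is far beyond the techniques of this paper. (That $\mu_{\Phi^\beta}^{+}\ne\mu_{\Phi^\beta}^{-}$ for $\beta>\beta_c$, so that the decomposition is genuinely one-parameter, follows at once because equality would force a unique Gibbs state, contradicting Proposition~\ref{phasetransIsing}(2); for large $\beta$ a Peierls estimate like that in the proof of Proposition~\ref{lowtempmain} even gives $\int_\Omega\omega(0)\,d\mu_{\Phi^\beta}^{+}(\omega)>0$, and positivity of this spontaneous magnetization throughout $\beta>\beta_c$ is classical, see \cite[Chapter~3]{FV1}.)
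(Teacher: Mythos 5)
Your proposal is correct and follows essentially the same route as the paper, which itself gives no self-contained argument but refers to \cite[Chapter 3]{FV1} for the convergence and translation invariance (the standard FKG/stochastic-domination argument you spell out) and to \cite{Aiz80}, \cite{Hig81} and \cite{Bod06} for the decomposition $I_{\Phi^\beta}=\left\{p\mu_{\Phi^\beta}^++(1-p)\mu_{\Phi^\beta}^-\left|0\leq p\leq1\right.\right\}$. You correctly identify the last step as the part that must be cited rather than reproved, so there is nothing to criticize.
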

	For a proof of the first half of the statement, see \cite[Chapter 3]{FV1} and for that of the second statement, see \cite{Aiz80} or \cite{Hig81} for $d=2$ and \cite{Bod06} for $d\geq3$. These results are due to the FKG inequality, the crucial property of ferromagnetic interactions. If $\beta>0$ is sufficinetly large and a perturbation interaction $\Psi$ is finite lange, then, as we saw in Section \ref{pirogovsinai}, we can apply the Pirogov-Sinai theory to $\Phi=\Phi^\beta+\Psi$ and the same statement holds (see \cite{Zah84}).
	\begin{prob}For sufficiently large $\beta>0$, determine whether the same statement as Proposition {\rm \ref{compphasediagram}} holds or not in general when $\Psi$ is a translation invariant and spin-flip symmetric interaction such that $\opnorm{\Psi}$ is sufficiently small.
	\end{prob}
	We notice that these general cases include many cases to which the FKG inequality and existing extensions of the Pirogov-Sinai theory can not be applicable.
	
	We notice another question. In Theorem \ref{mainthm}, we saw the stability of uniquness of the equilibrium state for sufficiently small $\beta$ and coexistence for sufficiently large $\beta$.
	Then what is about the case when $\beta>0$ is intermediate? At $\beta=\beta_c$, it is clear that the stability is broken. Hence we have a problrm as follows.
	\begin{prob}
		\begin{enumerate}
			\renewcommand{\labelenumi}{\rm{(\arabic{enumi})}}
			\item Determine for each $0<\beta<\beta_c$ whether the stability of uniqueness of the equilibrium state as Theorem {\rm \ref{mainthm} (1)} holds or not.
			\item Determine for each $\beta_c<\beta<\infty$ whether the stability of coexistence of equilibrium states as Theorem {\rm \ref{mainthm} (2)} holds or not.
		\end{enumerate}
	\end{prob}
	
	\section*{Appendix}
	Here we give a proof of Proposition \ref{perturbcont}.
	\begin{proof}[{\rm Proof of Proposition \ref{perturbcont}}]
		Let $\Phi_0$ be an interaction.
		We take an open set $I_{\Phi_0}\subset U\subset I$. For an interaction $\Phi$, we define the function $F_\Phi:I\to\R$ as 
		$$
		F_\Phi(\mu)=h(\mu)+\int_\Omega A_\Phi d\mu,\quad\mu\in I.
		$$
		Then $F_\Phi$ achieves $P^\Phi=\sup_{\mu\in I}F_\Phi(\mu)$ exactly on $I_\Phi$. Moreover, since $I\ni\mu\mapsto\int_\Omega A_\Phi d\mu\in\R$ is continuous and $I\ni\mu\mapsto h(\mu)\in\R$ is upper semicontinuous, $F_\Phi(\mu)$ is upper semicontinuous in $\mu\in I$. Hence, for $\Phi=\Phi_0$, we have
		$$
		I\setminus U\subset\left\{\mu\in I\left| F_{\Phi_0}(\mu)<P^{\Phi_0}\right.\right\}
		$$
		and the left-hand side is compact and the right-hand side is open in $I$. Then, using the upper semicontinuity of $F_{\Phi_0}$ again, we have $0<\varepsilon<1$ such that
		$$
		I\setminus U\subset\left\{\mu\in I\left| F_{\Phi_0}(\mu)<P^{\Phi_0}-\varepsilon\right.\right\}.
		$$
		We set $\delta=\varepsilon/4$ and take an arbitrary interaction $\Psi$ with $\|\Psi\|<\delta$. We write $\Phi=\Phi_0+\Psi$.
		Then, for any $\mu\in I$, we have
		\begin{align*}
			\left|F_\Phi(\mu)-F_{\Phi_0}(\mu)\right|=&\left|\int_\Omega A_\Phi d\mu-\int_\Omega A_{\Phi_0}d\mu\right|\\
			\leq&\int_\Omega\left|A_\Phi-A_{\Phi_0}\right|d\mu\\
			\leq&\ \|\Psi\|\\
			<&\ \delta
		\end{align*}
		and 
		$$
		\left|P^\Phi-P^{\Phi_0}\right|=\left|\sup_{\mu\in I}F_\Phi(\mu)-\sup_{\mu\in I}F_{\Phi_0}(\mu)\right|\leq\delta.
		$$
		If $\mu\in I\setminus U$, then $F_{\Phi_0}(\mu)<P^{\Phi_0}-\varepsilon$. Hence we have
		\begin{align*}
			F_\Phi(\mu)\leq&\ F_{\Phi_0}(\mu)+\delta\\
			<&\ P^{\Phi_0}-\varepsilon+\delta\\
			\leq&\ P^\Phi-\varepsilon+2\delta\\
			=&\ P^\Phi-\frac{\varepsilon}{2}
		\end{align*}
		and $\mu$ does not achieve the supremum. Hence $\mu\notin I_\Phi$.
	\end{proof}

\ \\
Department of Mathematics\\
Faculty of Science\\
Kyoto University\\
Kitashirakawa Oiwake-cho, Sakyo-ku,\\
Kyoto 606-8502, Japan\\
e-mail: s.usuki@math.kyoto-u.ac.jp

\begin{thebibliography}{99}
		\bibitem{Aiz80}M. Aizenman: {\it Translation invariance and instability of phase coexistence in the two dimensional Ising system}, Comm. Math. Phys. {\bf 73} (1980), 83-94.
		\bibitem{ADS15}M. Aizenman, H. Duminil-Copin, V. Sidoravicius: {\it Random currents and continuity of Ising model's spontaneous magnetization}, Comm. Math. Phys. {\bf 334} (2) (2015), 719-742.
		\bibitem{BCK07}M. Biskup, L. Chayes, S.A. Kivelson: {\it On the absence of ferromagnetism in typical 2D ferromagnets}, Comm. Math. Phys. {\bf 274} (2007), 217-231.
		\bibitem{Bod06}T. Bodineau: {\it Translation invariant Gibbs states for the Ising model}, Probab. Theory Related Fields {\bf 135} (2) (2006), 153-168.
		\bibitem{BKU96}C. Borgs, R. Kotecky, D. Ueltschi: {\it Low temperature phase diagrams for quantum perturbations of classical spin systems}, Comm. Math. Phys. {\bf 181} (1996), 409-446.
		\bibitem{DE80}H.A.M. Daniëls, A.C.D. van Enter: {\it Differentiability properties of the pressure in lattice systems}, Comm. Math. Phys. {\bf 71} (1980), 65-76.
		\bibitem{DFF96}N. Datta, R. Fernández, J. Fröhlich: {\it Low-temperature phase diagrams of quantum lattice systems. I. Stability for quantum perturbations of classical systems with finitely-many ground states}, J. Statist. Phys. {\bf 84} (3-4) (1996), 455-534.
		\bibitem{DFR96}N. Datta, J. Fröhlich, L. Rey-Bellet: {\it Low-temperature phase diagrams of quantum lattice systems. II. Convergent perturbation expansions and stability in systems with infinite degeneracy}, J. Helv. Phys. Acta {\bf 69} (1996), 752-820.
		\bibitem{Ent81}A.C.D. van Enter: {\it A note on the stability of phase diagrams in lattice systems}, Comm. Math. Phys. {\bf 79} (1981), 25-32.
		\bibitem{FV1}S. Friedli, Y. Velenik: Statistical mechanics of lattice systems: a concrete mathematical introduction, Cambridge University Press, 2017.
		\bibitem{Geo11}H. Georgii: Gibbs measures and phase transitions, second edition, Walter De Gruyter, Berlin, 2011.
		\bibitem{GGR66} J. Ginibre, A. Grossmann, D. Ruelle: {\it Condensation of lattice gases},  Comm. Math. Phys. {\bf 3} (1966), 187-193.
		\bibitem{Hig81}Y. Higuchi: {\it On the absence of non-translation invariant Gibbs states for the two-dimensional Ising model}, in Random fields, Vol. I, II (Esztergom, 1979), Colloquia Mathematica Societatis János Bolyai {\bf 27}, 517-534, North-Holland, Amsterdam.
		\bibitem{Isr86}R.B. Israel: {\it Generic triviality of phase diagrams in spaces of long-range interactions}, Comm. Math. Phys. {\bf 106} (1986), 459-466.
		\bibitem{Par}Y. M. Park: {\it Extension of Pirogov-Sinai theory of phase transitions to infinite range interactions I. Cluster expansion}, Comm. Math. Phys. {\bf 114} (1988), 187-218, and 
		{\it Extension of Pirogov-Sinai theory of phase transitions to infinite range interactions. II. Phase diagram}, Comm. Math. Phys. {\bf 114} (1988), 219-241.
		\bibitem{PS}S.A. Pirogov, Ya.G. Sinai: Teor. Mat. Fiz. {\bf 25} (1975), 358-369, in Russian; English translation: {\it Phase diagrams of classical lattice systems}, Theor. Math. Phys. {\bf 25} (1975), 1185-1192, and 
		Teor. Mat. Fiz. {\bf 26} (1976), 358-369, in Russian; English translation: {\it Phase diagrams of classical lattice systems (Continuation)},  Theor. Math. Phys. {\bf 26} (1976), 39-49,
		\bibitem{Ru}D. Ruelle: Thermodynamic formalism: The mathematical structures of equilibrium statistical mechanics, second edition, Cambridge University Press, Cambridge, 2004.
		\bibitem{Sok82} A. D. Sokal: {\it More surprises in the general theory of lattice systems}, Comm. Math. Phys. {\bf 86} (1982), 327-336.
		\bibitem{Zah84}M. Zahradník: {\it An alternate version of Pirogov-Sinai Theory}, Comm. Math. Phys. {\bf 93} (1984), 559-581.
	\end{thebibliography}
\end{document}